\documentclass[12pt]{article}

\usepackage{amsmath,graphicx,amsthm}

\newtheorem{theorem}{Theorem}[section]
\newtheorem{corollary}[theorem]{Corollary}

\newtheorem{proposition}[theorem]{Proposition}

\newcommand{\f}{\operatorname}

\title{A Modified Reference Prior for the Generalized Gamma Distribution}
\author{Pedro L. Ramos\thanks{ Email: pedrolramos@hotmail.com}  
\ \ and \ \ Francisco Louzada\thanks{ Email: louzada@icmc.usp.br} \\ \\ Institute of Mathematical Science and Computing \\ Universidade de Sao Paulo, Sao Carlos-SP, Brazil } 
\date{December 15, 2014}

\begin{document}
\maketitle

\begin{abstract}
In this paper we propose an objective Bayesian estimation approach for the parameters of the generalized gamma distribution. Various reference priors are obtained, but showing that they lead to improper posterior distributions. 
We overcome this problem by proposing a modification in a reference priori distribution, allowing for a proper posterior distribution for the parameters of the generalized gamma distribution.
We perform a simulation study in order to study the efficiency of the proposed methodology, which is also fully illustrated on a real data set.

 \noindent
 \textbf{Keywords}: Generalized Gamma Distribution, Bayesian Inference, Objective Prior, Reference Prior.
\end{abstract}

\section{Introduction}

Introduced by Stacy (1962), the Generalized Gamma distribution (GG) is an important distribution that has proven to be very flexible in practice for modeling data from several areas, such as climatology, meteorology medicine, reliability and image processing data, among others. The GG distribution is a general distribution which has several particular cases, such as the exponential, Weibull, gamma, log-normal,  generalized normal, half-normal, Rayleigh, Maxwell-Boltzmann and chi distributions.
 Marani et al. (1986) use this distribution to analyze data relating to air quality in Venice, Italy. Thai and Meyer (1999) propose new methods for analyzing citations in recent publications to find journals with greater influence using the GG distribution. Aalo, et al. (2005) successfully used this distribution to analyze the performance degradation of wireless communication systems. Li et al. (2011) use the GG  distribution to obtain different techniques for processing SAR (Synthetic aperture radar) images. Other applications of the GG distribution can be seen in Noortwijk (2001), Dadpay et al. (2007), Balakrishnan e Peng (2006),  Raju e Srinivasan, (2002), e Ahsanullah et al. (2013).	

Using the classical approach, Hargar \& Bain (1970) show that the non-linear equations obtained by Stacy \& Mihram (1965) are very unstable. DiCiccio (1987) obtained approximate inferences based on approximations of normal distributions, however, as an approximated method, the obtained inference is inaccurate for small sample sizes. Huang \& Hwang (2006) use the method of moments to perform inferences for the GG distribution. Khodabin \& Ahmadabadi (2010) compare the method of moments with the maximum likelihood method and conclude that, in general, the maximum likelihood estimators (MLE's) have superior performance. However, the MLE values may heavily depend on the initial values for the parameters, needed for initiate iterative methods.

Using the Bayesian approach, Chang \& Kim (2011) deals with non-informative prior for the GG distribution.  
However, the authors consider that one of the parameters of the distribution is known.
Maswadah et. al (2013) also consider the case where one of the parameters is known and perform inference for the GG distribution based on order statistics. 
At least in principle, they methods are not recommended for real applications since, usually, all the parameters are unknown needing to be estimated. 

An important reference prior was introduced by Bernardo (1979), with further developments by Berger \& Bernardo (1989, 1992a, 1992b, 1992c). The proposed idea is to maximize the expected Kullback-Leibler divergence between the posterior distribution and the prior. The reference prior provides posterior distribution with interesting properties, such as invariance, consistent marginalization and consistent sampling properties (Bernardo, 2005). Some recent reference priors were obtained for the Pareto (Fu et. al, 2012),  Poisson-Exponential (Tomazella et. al, 2013), Extended Exponential Geometric (Ramos et. al, 2014), Inverse-Weibull (Kim et. al, 2014), Generalized Half-normal (Kang et. al, 2014). In this paper however we demonstrated that different reference priors for the GG distribution lead to improper posterior distributions. 

We overcome the problems presented above by proposing a modification in a reference prior distribution obtained when all the parameters are of interesting, leading to a proper posterior density. These results are of great practical interest since they finally enable the use of the GG distribution in practice from the Bayesian point of view, in a methodologically correct way, breaking with the problem of estimating the parameters of this important distribution.

The paper is organized as follows. In Section 2, we present the GG distribution and discuss its properties. In Section 3, we carry out a reference Bayesian analysis for this model. In this section we also present our modified reference prior. In Section 4 a simulation study is presented in order to study the efficiency of the proposed method. In Section 5 the methodology is illustrated in a real data set. Some final comments are presented in Section 6.

\section{Generalized Gamma Distribution}
\vspace{0.3cm}

A random variable T has a GG distribution if its density function is given by (Stacy, 1962),
\begin{equation}\label{denspgg}
f(t|\boldsymbol{\theta})= \frac{\alpha}{\Gamma(\phi)}\mu^{\alpha\phi}t^{\alpha\phi-1}\exp\left(-(\mu t)^{\alpha}\right) ,
\end{equation}
where $t>0$, $\boldsymbol{\theta}=(\phi,\mu,\alpha)$ and $\Gamma(\phi)=\int_{0}^{\infty}{e^{-x}x^{\phi-1}dx}$ is called gamma function. The shape parameters are given by $\alpha>0$ and $\phi >0$ and $\mu >0$ is a scale parameter. 

The cumulative distribution function is given by
\begin{equation}\label{densagg}
F(t|\boldsymbol{\theta})= \int_{0}^{(\mu t)^{\alpha}}{\frac{1}{\Gamma(\phi)}w^{\phi -1}e^{-w}}dw =\frac{\gamma\left[\phi,(\mu t)^{\alpha}\right]}{\Gamma(\phi)}\ ,
\end{equation}
where $\gamma(y,x)=\int_{0}^{x}{w^{y-1}e^{-w}}dw$ is called lower incomplete gamma.

Relevant probability distributions can be obtained from the GG distribution as the Weibull distribution ($\phi=1$) , Gamma distribution ($\alpha=1$), Log-Normal distribution (limit case when $\phi \rightarrow \infty$) and the Generalized Normal distribution ($\alpha=2$). The Generalized Normal distribution is also a distribution that includes several known distributions like, half-normal ($\phi=1/2,\mu=1/\sqrt{2}\sigma$), Rayleigh ($\phi=1,\mu=1/\sqrt{2}\sigma$), Maxwell-Boltzmann distribution ($\phi=3/2$) and chi distribution ($\phi=k/2 , k=1,2,\ldots$).

The mean and the variance of the GG distribution, are  respectively given by
\begin{equation}\label{mediagg}
E(T)=\frac{\Gamma\left(\phi + \frac{1}{\alpha}\right)}{\mu\Gamma(\phi)},
\end{equation}
\begin{equation}\label{vargg}
V(T)=\frac{1}{\mu^2}\left\{\frac{\Gamma\left(\phi + \frac{2}{\alpha}\right)}{\Gamma(\phi)}-\left(\frac{\Gamma\left(\phi + \frac{1}{\alpha}\right)}{\Gamma(\phi)}\right)^2\right\}.
\end{equation}

The Fisher information matrix given by,
\begin{equation}\label{mfishergg}
I(\alpha,\mu,\phi)=
\begin{bmatrix}
 \dfrac{1+2\psi(\phi)+\phi\psi ' (\phi)+\phi\psi(\phi)^2}{\alpha^2} & -\dfrac{1+\phi\psi(\phi)}{\mu} & -\dfrac{\psi(\phi)}{\alpha} \\
 -\dfrac{1+\phi\psi(\phi)}{\mu} & \dfrac{\phi\alpha^2}{\mu^2}  & \dfrac{\alpha}{\mu} \\
 -\dfrac{\psi(\phi)}{\alpha} & \dfrac{\alpha}{\mu} & \psi ' (\phi)
\end{bmatrix} .
\end{equation}

The hazard function is given by
\begin{equation}\label{fusobgg}
h(t|\boldsymbol{\theta})=\frac{f(t|\boldsymbol{\theta})}{S(t|\boldsymbol{\theta})}=\frac{\alpha\mu^{\alpha\phi}t^{\alpha\phi-1}\exp\left(-(\mu t)^{\alpha}\right)}{\Gamma\left[\phi,(\mu t)^{\alpha}\right]} ,
\end{equation}
where $\Gamma(y,x)=\int_{x}^{\infty}{w^{y-1}e^{-w}}dw$ is called upper incomplete gamma.

This model allows to obtain different forms of the hazard function, such as constant, increasing, decreasing, bathtub and unimodal shape. These properties make the GG distribution a  flexible model for reliability data. 
Figure 1 shows examples of the shapes of the probability density function and hazard function for different values of $\phi, \mu$ e $\alpha$.
\begin{figure}[!htb]
\centering
\includegraphics[scale=0.6]{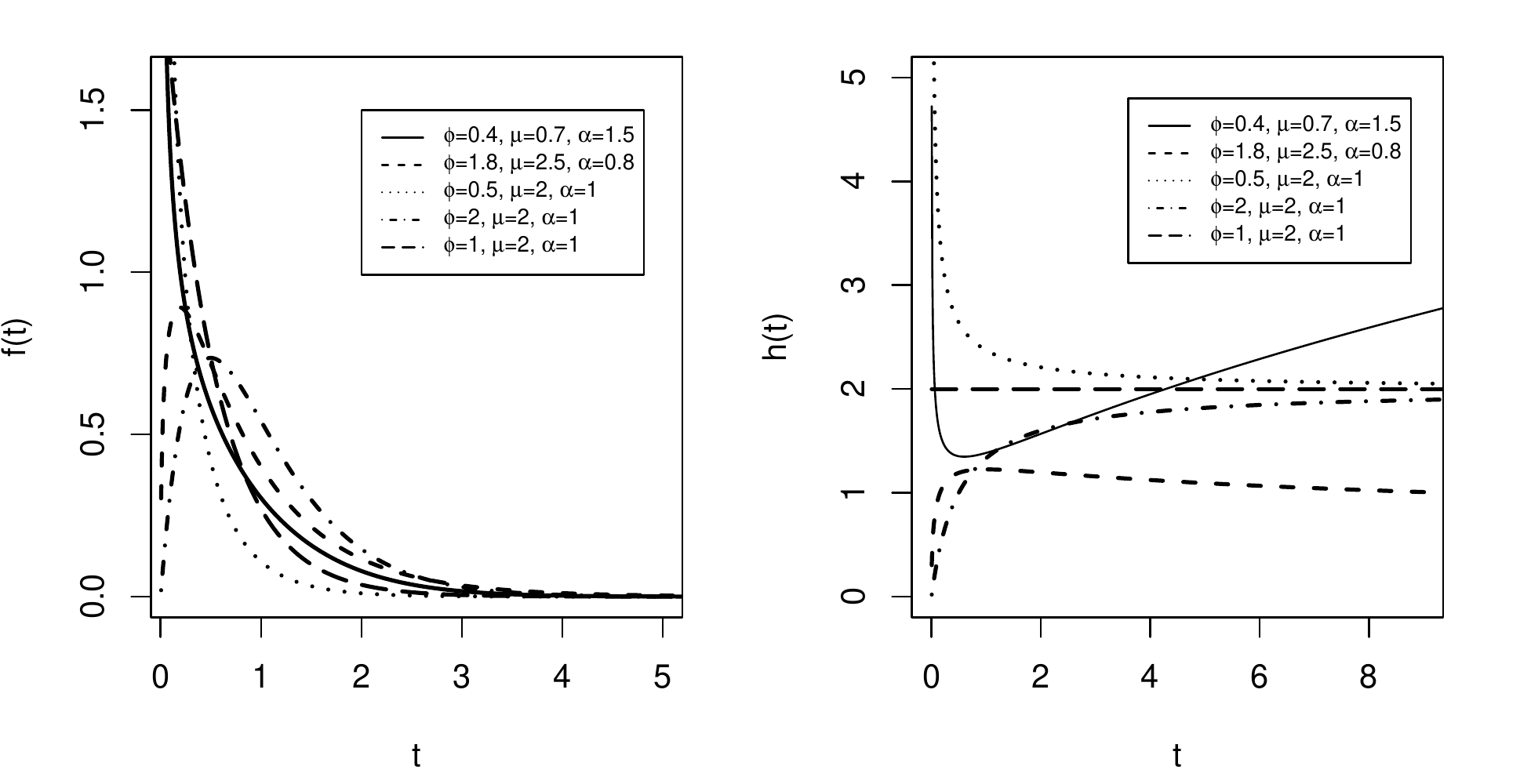}
\caption{Left panel: probability density function of the GG distribution. Right panel: hazard function of the GG distribution.}\label{fsobggen}
\end{figure}

\newpage
\section{Reference Bayesian Analysis}
\vspace{0.3cm}

The main objective of reference Bayesian analysis introduced by Bernardo (1979) and further developed by Berger and Bernardo (1992a, 1993b) is to specify a prior distribution, where the dominant information is provided by the data. To obtain the reference prior, when there are nuisance parameters, the construction must be performed using the parameters of interest ordered. The procedure for obtaining a reference prior to the parameter of interest is described as follows (Bernardo, 2005).

\begin{proposition} Let $p(x|\theta,\lambda_1,\cdots,\lambda_m)$, be a parametric model, the quantity of interest be $\theta$ and $H(\theta,\lambda_1,\cdots,\lambda_m)$ Fisher's information matrix $(m+1)\times(m+1)$. It is assumed that the joint distribution of $(\theta,\lambda_1,\cdots,\lambda_m)$ is asymptotically normal with the mean $(\hat\theta,\hat\lambda_1,\cdots,\hat\lambda_m)$ the estimating correspondents of maximum likelihood, and with the covariance matrix S$(\hat\theta,\hat\lambda_1,\cdots,\hat\lambda_m)$, where $S=H^{-1}$. If $S_j$ is a $j\times j$ upper left submatrix of $S$, $H_j= S_j^{-1}$ and $h_{i,j}(\theta,\lambda)$ is an element of $H_j$, it follows that:

\begin{enumerate}

\item The reference priors are
\begin{equation*}
\pi(\lambda_m|\theta,\lambda_1,\cdots,\lambda_{m-1})\propto h_{m+1,m+1}^{\frac{1}{2}}(\theta,\lambda_1,\cdots,\lambda_{m}), 
\end{equation*}
\begin{equation*}
\begin{aligned}
\pi(\lambda_i|\theta,\lambda_1,\cdots,\lambda_{i-1})\propto &\exp(\int_{\wedge_{i+1}}\cdots\int_{\wedge_{m}}\log\left(h_{i+1,i+1}^{\frac{1}{2}}(\theta,\lambda_1,\cdots,\lambda_{m})\right)\times\\ &\times\prod_{j=i+1}^{m}\pi(\lambda_j|\theta,\lambda_1,\cdots,\lambda_{j-1})\boldsymbol{d\lambda_{i+1}}), 
\end{aligned}
\end{equation*}
where $\boldsymbol{d\lambda_{i+1}}=d\lambda_j\times\cdots\times d\lambda_j$ and $\pi(\lambda_i|\theta,\lambda_1,\cdots,\lambda_{i-1}), i=1,\cdots,m$ are all proper. If any of those conditional reference priors is not proper, then a compact approximation is required for the corresponding integrals.

\item The marginal reference prior of $\theta$ is given by solving
\begin{equation*}
\begin{aligned}
\pi(\theta)\propto &\exp(\int_{\wedge_{i+1}}\cdots\int_{\wedge_{m}}\log\left(s_{1,1}^{-\frac{1}{2}}(\theta,\lambda_1,\cdots,\lambda_{m})\right)\times\\ &\times\prod_{j=i+1}^{m}\pi(\lambda_j|\theta,\lambda_1,\cdots,\lambda_{j-1})\boldsymbol{d\lambda_{i+1}}), 
\end{aligned}
\end{equation*}
where $s_{1,1}^{-\frac{1}{2}}(\theta,\lambda_1,\cdots,\lambda_{m})=h_{1,1}^{\frac{1}{2}}(\theta,\lambda_1,\cdots,\lambda_{m})$

\item The posterior distribution of $\theta$, given by the values $(y_1,\cdots,y_n)$ is
\begin{equation*}
\begin{aligned}
\pi(\theta|x_1,\cdots,x_n)\propto &\exp(\int_{\wedge_{i+1}}\cdots\int_{\wedge_{m}}p(x_1,\cdots,x_n|\theta,\lambda_1,\cdots,\lambda_{m})\times\\ &\times\prod_{j=i+1}^{m}\pi(\lambda_j|\theta,\lambda_1,\cdots,\lambda_{j-1})d\lambda_{1},\cdots,d\lambda_{m}), 
\end{aligned}
\end{equation*}
\end{enumerate}
\end{proposition}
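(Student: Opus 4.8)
The plan is to \emph{derive} the three displayed formulas from the defining variational principle of the reference prior rather than to verify them directly, since they are the output of the Berger--Bernardo construction, whose starting point is that the reference prior maximizes the expected Kullback--Leibler divergence between the posterior and the prior. Concretely, I would introduce the missing-information functional
\[
I_k(\pi)=\int p(x^{(k)})\int \pi(\boldsymbol{\theta}\mid x^{(k)})\log\frac{\pi(\boldsymbol{\theta}\mid x^{(k)})}{\pi(\boldsymbol{\theta})}\,d\boldsymbol{\theta}\,dx^{(k)},
\]
where $x^{(k)}$ denotes $k$ independent replications of the experiment, and take the reference prior to be the (suitably normalized) limit of the maximizers of $I_k$ as $k\to\infty$. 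The assumed asymptotic normality of the joint posterior, with covariance $S=H^{-1}$, is precisely what lets me replace the exact posterior in $I_k$ by its Gaussian approximation for large $k$, reducing the problem to a tractable Gaussian computation.

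Next I would establish the one-parameter base case. For a model with a single parameter $\vartheta$, the asymptotic posterior is normal with precision equal to the Fisher information $H(\vartheta)$, so the expected log-posterior density contributes a term $\tfrac12\log H(\vartheta)$; substituting into $I_k$ and maximizing over $\pi$ subject to $\int\pi=1$ by a standard calculus-of-variations argument yields Jeffreys' rule $\pi(\vartheta)\propto H(\vartheta)^{1/2}$. This is the engine for everything that follows.

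I would then build the multiparameter prior sequentially, working from the innermost (least important) parameter $\lambda_m$ outward. At stage $i$, I treat $\lambda_i$ as the single parameter of interest \emph{conditional} on $(\theta,\lambda_1,\ldots,\lambda_{i-1})$, with the more-nuisance parameters $\lambda_{i+1},\ldots,\lambda_m$ already carrying their conditional reference priors; applying the base case to this conditional problem produces a factor $h_{i+1,i+1}^{1/2}$, where the relevant conditional precision is exactly the $(i+1,i+1)$ entry of $H_{i+1}=S_{i+1}^{-1}$ --- this identity is the key linear-algebra step and I would prove it via the Schur-complement expression for the inverse of the upper-left block of $S$. Averaging this factor over the inner parameters against $\prod_{j>i}\pi(\lambda_j\mid\cdots)$ yields the geometric-mean form $\exp(\int\cdots\int\log(h_{i+1,i+1}^{1/2})\prod\pi\,d\lambda)$ of item~1; for the stage corresponding to $\theta$ the marginal precision is $s_{1,1}^{-1}=h_{1,1}$, giving item~2, and item~3 follows by combining the resulting joint prior with the likelihood through Bayes' theorem and integrating out the nuisance parameters.

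The main obstacle I anticipate is analytic rather than algebraic: justifying the interchange of the $k\to\infty$ limit with the maximization, and handling the fact that for improper priors the functional $I_k$ need not be finite. This is the reason the statement already hedges with ``if any of those conditional reference priors is not proper, then a compact approximation is required''; rigorously, one restricts to an increasing sequence of compact subsets, solves the well-posed problem there, and passes to the limit, checking that the limit is independent of the exhausting sequence. Verifying this limiting behaviour, together with the Schur-complement identities above, is where the real work lies; the remaining manipulations are routine.
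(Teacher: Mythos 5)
The first thing to recognize is that the paper does not prove this proposition at all: it is stated as the known Berger--Bernardo reference-prior algorithm, attributed to Bernardo (2005), and it enters the paper purely as quoted background --- the authors only invoke its Corollaries 3.2 and 3.3 to read off the conditional priors $\pi(\alpha)$, $\pi(\mu|\cdot,\cdot)$, $\pi(\phi|\cdot,\cdot)$ for the generalized gamma model from the Fisher information matrix (\ref{mfishergg}). So there is no proof in the paper to compare yours against; the only meaningful comparison is with the original sources (Berger and Bernardo, 1992a; Bernardo, 2005).

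Measured against that literature, your plan identifies the correct architecture: the missing-information functional $I_k$, its reduction via asymptotic posterior normality, the one-parameter Jeffreys base case, the backward sequential construction over the nuisance parameters, the identity that the relevant conditional precision is the $(j,j)$ entry of $H_j=S_j^{-1}$ (proved by Schur complements applied to the upper-left block of $S=H^{-1}$), and the compact-exhaustion device when the conditional priors are improper. You also read item 3 correctly as Bayes' theorem plus marginalization over the nuisance parameters; note that the displayed formula in the statement is in fact garbled (the $\exp(\cdot)$ wrapping the likelihood integral should not be there), and your interpretation silently repairs it. However, be clear about what you have produced: a proof plan, not a proof. Every step that carries actual difficulty --- the interchange of the $k\to\infty$ limit with the maximization, the finiteness and normalization of $I_k$ on compacts, the independence of the limit from the exhausting sequence, and the verification that the sequential maximization yields precisely the stated geometric-mean formulas --- is named and then deferred. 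Those steps are essentially the entire content of Berger and Bernardo (1992a); they are not ``routine,'' and a self-contained proof would have to carry them out rather than point at them. As a reconstruction of where the result comes from and what must be verified, your outline is accurate; as a proof, it is incomplete in exactly the places where the work lives.
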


\begin{corollary} If the nuisance parameter spaces $\wedge_{i}(\theta,\lambda_1,\cdots,\lambda_{j-1})=\wedge_{i}$ are independent of $\theta$ and $\lambda_i$'s and the functions $h_{i,i},\cdots,h_{m,m}$ factorize in the form
\begin{equation*}
s_{1,1}^{\frac{1}{2}}(\theta,\lambda_1,\cdots,\lambda_{m})=f_0(\theta)g_0(\lambda_1,\cdots,\lambda_{m}),
\end{equation*}
\begin{equation*}
h_{i+1,i+1}^{\frac{1}{2}}(\theta,\lambda_1,\cdots,\lambda_{m})=f_i(\lambda_i)g_i(\theta,\lambda_1,\cdots,\lambda_{i-1},\lambda_{i+1},\cdots,\lambda_{m}), i=1,\cdots,m.
\end{equation*}

Then
\begin{equation}
\pi(\theta)\propto f_0(\theta), \pi(\lambda_i|\theta,\lambda_1,\cdots,\lambda_{m})\propto f_i(\lambda_i), i=1,\cdots,m,
\end{equation}
and there is no need for compact approximations, even if  $\pi(\lambda_i|\theta,\lambda_1,\cdots,\lambda_{m})$ are not proper.
\end{corollary}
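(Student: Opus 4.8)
The plan is to proceed by backward induction on the index $i$, substituting the assumed factorizations into the formulas of the preceding proposition and showing that every factor depending on variables other than the one currently being normalized drops out, either by exact cancellation or by being absorbed into a multiplicative constant. Throughout I would work with an increasing sequence of compact subsets $\wedge_j^{(k)} \uparrow \wedge_j$, define each conditional prior by normalizing over these compacts, and only at the end pass to the limit $k \to \infty$. The content of the corollary is precisely that this limit is trivial, because the functional shape of each prior is already pinned down at every finite stage $k$.

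For the base case $i = m$, the proposition gives $\pi(\lambda_m \mid \theta, \lambda_1, \ldots, \lambda_{m-1}) \propto h_{m+1,m+1}^{1/2}$. Substituting $h_{m+1,m+1}^{1/2} = f_m(\lambda_m)\, g_m(\theta, \lambda_1, \ldots, \lambda_{m-1})$ and normalizing over $\wedge_m^{(k)}$, the factor $g_m$, which does not depend on $\lambda_m$, cancels between the numerator and the normalizing integral, leaving $\pi_k(\lambda_m \mid \cdot) = f_m(\lambda_m)/\int_{\wedge_m^{(k)}} f_m$. For the inductive step I would assume $\pi_k(\lambda_j \mid \cdot) \propto f_j(\lambda_j)$ for all $j > i$ and examine the exponent in the formula for $\pi(\lambda_i \mid \cdot)$. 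Writing $\log h_{i+1,i+1}^{1/2} = \log f_i(\lambda_i) + \log g_i$ splits the multiple integral into two pieces: since $\prod_{j>i}\pi_k(\lambda_j \mid \cdot)$ integrates to one over the compacts, the first piece returns $\log f_i(\lambda_i)$ exactly, while the second, $\int \log g_i \prod_{j>i}\pi_k(\lambda_j\mid\cdot)\, d\lambda_{i+1}\cdots d\lambda_m$, does not involve $\lambda_i$ at all because $g_i$ is independent of $\lambda_i$ by hypothesis. Exponentiating then yields $\pi_k(\lambda_i \mid \cdot) \propto f_i(\lambda_i)$, the residual constant being harmless.

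The marginal for $\theta$ follows by the same mechanism applied to the last display of the proposition, reading the $\theta$-factorization through the identity $s_{1,1}^{-1/2} = h_{1,1}^{1/2} = f_0(\theta)\, g_0(\lambda_1, \ldots, \lambda_m)$ supplied there. The term $\log f_0(\theta)$ is pulled outside the integral, and the remaining $\int \log g_0 \prod_j \pi_k(\lambda_j \mid \cdot)\, d\lambda$ is a pure constant, independent of $\theta$ because both $g_0$ and, by the previous step, each $\pi_k(\lambda_j \mid \cdot) \propto f_j(\lambda_j)$ are free of $\theta$; hence $\pi(\theta) \propto f_0(\theta)$. Here the hypothesis that the spaces $\wedge_j$ do not depend on $\theta$ or on the $\lambda_i$'s is essential: it guarantees that the domains of integration are fixed products, so that integrating out $\lambda_{i+1}, \ldots, \lambda_m$ cannot smuggle a dependence on $\lambda_i$ or $\theta$ back into the leftover constants.

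The main obstacle I anticipate is the bookkeeping around impropriety. One must argue that although the residual constants may fail to converge as $k \to \infty$, this is immaterial: at every finite $k$ the normalized conditional has exactly the shape $f_i(\lambda_i)$, and the reference prior, defined as the limit of $\pi_k$ divided by its value at a fixed reference point, therefore inherits that shape, with the divergent part cancelling in the ratio. Making this last observation precise — that the factorization forces the compact-approximation limit to be independent of the chosen sequence $\{\wedge_j^{(k)}\}$ and to coincide with the naive formula — is the crux of why \emph{no compact approximations are needed}, and is the step I would write out most carefully.
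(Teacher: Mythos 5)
The paper itself gives no proof of this corollary: it is stated (together with Proposition 3.1 and Corollary 3.3) as background material transcribed from Bernardo (2005) and Berger--Bernardo, so there is no in-paper argument to compare yours against. Your reconstruction is the standard proof and it is correct: backward induction through the algorithm of Proposition 3.1, splitting $\log h_{i+1,i+1}^{1/2}=\log f_i+\log g_i$, using that the conditionals at each compact stage integrate to one so the $f_i$-term survives exactly while the $g_i$-term is absorbed into the conditioning-dependent constant, and invoking the fixed (parameter-independent) domains $\wedge_j$ to keep the normalizing constants free of $\theta$ and of the earlier $\lambda$'s --- that last observation is indeed the point of the hypothesis and the reason the compact-approximation limit is trivial. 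Two minor remarks: you implicitly assume the integrals $\int \log g_i \prod_{j>i}\pi_k(\lambda_j\mid\cdot)\,d\lambda$ are finite at each stage, a regularity condition Bernardo's statement also leaves tacit; and you correctly read through the paper's notational slip ($s_{1,1}^{1/2}$ in the hypothesis versus $s_{1,1}^{-1/2}=h_{1,1}^{1/2}$ in Proposition 3.1), which is a transcription error in the paper rather than an issue with your argument.
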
 
\begin{corollary} 
If the parameter spaces $\wedge_{j}(\theta_1,\cdots,\theta_{j-1},\theta_{j+1},\cdots,\theta_{m})=\wedge_{j}$ are independent of $\theta_1,\cdots,\theta_{j-1},\theta_{j+1},\cdots,\theta_{m}$ and $h_{j,j}, j=1,\cdots,m,$ are factorize in the form
\begin{equation*}
h_{j,j}^{\frac{1}{2}}(\boldsymbol\theta)=f_j(\theta_j)g_j(\theta_1,\cdots,\theta_{j-1},\theta_{j+1},\cdots,\theta_{m}), j=1,\cdots,m.
\end{equation*}

Then the reference prior for the ordered parameters $\boldsymbol\theta=(\theta_1,\cdots,\theta_{m})$ is given by $\pi_{\boldsymbol\theta}(\boldsymbol\theta)=\prod_{i=1}^{m}f_j(\theta_j)$ and there is no need for compact approximations, even if the conditional priors are not proper.
\end{corollary}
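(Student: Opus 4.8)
The plan is to feed the factorization hypothesis directly into the recursive formulas of the Proposition and show, by backward induction on the position $j$ of the ordered parameter $\theta_j$, that each conditional reference prior collapses to $\pi(\theta_j\mid\theta_1,\dots,\theta_{j-1})\propto f_j(\theta_j)$, with a proportionality constant that does not depend on the conditioning variables. The chain rule $\pi_{\boldsymbol\theta}(\boldsymbol\theta)=\prod_{j}\pi(\theta_j\mid\theta_1,\dots,\theta_{j-1})$ then assembles these into $\prod_j f_j(\theta_j)$. I would first treat the base case $j=m$, where the Proposition gives $\pi(\theta_m\mid\theta_1,\dots,\theta_{m-1})\propto h_{m,m}^{1/2}(\boldsymbol\theta)=f_m(\theta_m)g_m(\theta_1,\dots,\theta_{m-1})$; viewed as a function of $\theta_m$ alone this is proportional to $f_m(\theta_m)$, and since $\wedge_m$ does not depend on the remaining parameters, the conditioning variables enter only through the factor $g_m$, which is $\theta_m$-free and is absorbed into the normalization.

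For the inductive step I would substitute $\log h_{j,j}^{1/2}=\log f_j(\theta_j)+\log g_j(\theta_1,\dots,\theta_{j-1},\theta_{j+1},\dots,\theta_m)$ into the exponential–integral formula for $\pi(\theta_j\mid\theta_1,\dots,\theta_{j-1})$. The term $\log f_j(\theta_j)$ is free of the integration variables $\theta_{j+1},\dots,\theta_m$, so it factors out and is multiplied by the total mass of $\prod_{k>j}\pi(\theta_k\mid\cdot)$, which equals one when these conditionals are proper; the term $\log g_j$ does not involve $\theta_j$, so after integration it produces a quantity depending only on $\theta_1,\dots,\theta_{j-1}$. Exponentiating, both contributions other than $f_j(\theta_j)$ are constants in $\theta_j$ and are swallowed by the proportionality sign, giving $\pi(\theta_j\mid\theta_1,\dots,\theta_{j-1})\propto f_j(\theta_j)$.

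The delicate point — and the reason the statement explicitly claims that no compact approximation is needed — is the case in which the conditional priors are improper, so that the mass $\int\prod_{k>j}\pi(\theta_k\mid\cdot)$ in the exponent is not literally one and the integrals need not converge. Here I would invoke the compact-approximation device of the Proposition: choose an increasing sequence of compacts $\wedge^{(l)}=\wedge_1^{(l)}\times\cdots\times\wedge_m^{(l)}$, which is genuinely a product precisely because each $\wedge_j$ is assumed independent of the other parameters. On each $\wedge^{(l)}$ the conditionals are proper, so the induction above applies verbatim and yields $\pi^{(l)}(\theta_j\mid\cdot)\propto f_j(\theta_j)\mathbf{1}_{\wedge_j^{(l)}}(\theta_j)$; the crucial observation is that the normalizing constant $c_j^{(l)}=\left(\int_{\wedge_j^{(l)}}f_j\right)^{-1}$ depends on $l$ but not on the conditioning variables, since the $\theta_j$-dependence is confined to $f_j$ and $\wedge_j^{(l)}$ is a fixed set. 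Hence $\pi^{(l)}(\boldsymbol\theta)=\prod_j c_j^{(l)} f_j(\theta_j)\mathbf{1}_{\wedge^{(l)}}(\boldsymbol\theta)$.

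Finally I would pass to the limit in Bernardo's defining ratio $\pi(\boldsymbol\theta)=\lim_{l\to\infty}\pi^{(l)}(\boldsymbol\theta)/\pi^{(l)}(\boldsymbol\theta^*)$ at a fixed interior reference point $\boldsymbol\theta^*$. For $l$ large enough that both $\boldsymbol\theta$ and $\boldsymbol\theta^*$ lie in $\wedge^{(l)}$, every $c_j^{(l)}$ cancels between numerator and denominator, leaving $\prod_j f_j(\theta_j)/\prod_j f_j(\theta_j^*)$ independently of $l$, so the limit is $\pi_{\boldsymbol\theta}(\boldsymbol\theta)\propto\prod_{j=1}^m f_j(\theta_j)$. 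I expect the main obstacle to be bookkeeping this cancellation cleanly: one must check that the conditioning-variable dependence really is confined to the $\theta_j$-free factor $g_j$ at every stage of the backward recursion, so that the $l$-dependent normalizers are constant in the ratio and vanish in the limit — which is exactly what renders the compact approximation unnecessary in the end.
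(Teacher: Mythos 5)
Your proof is correct, but there is nothing in the paper to compare it against: the paper states this corollary (like Proposition 3.1 and Corollary 3.2) without proof, as background quoted from Bernardo (2005) and Berger--Bernardo (1992a), and then uses it purely as a computational device in Sections 3.4--3.5. What you have written is essentially the standard Berger--Bernardo justification that the paper delegates to the literature: backward induction through the conditional-prior recursion, where the factorization $h_{j,j}^{1/2}(\boldsymbol\theta)=f_j(\theta_j)\,g_j(\theta_1,\cdots,\theta_{j-1},\theta_{j+1},\cdots,\theta_m)$ makes the $g_j$ contribution constant in $\theta_j$ (hence absorbed by normalization), followed by the compact-approximation limit $\pi(\boldsymbol\theta)=\lim_{l}\pi^{(l)}(\boldsymbol\theta)/\pi^{(l)}(\boldsymbol\theta^{\ast})$, in which the $l$-dependent normalizers cancel precisely because each truncated support $\wedge_j^{(l)}$ is a fixed set, independent of the conditioning variables --- you correctly identify this as the exact place where both hypotheses of the corollary (factorization and parameter-space independence) are needed, and it is what makes the naive product $\prod_j f_j(\theta_j)$ agree with the rigorous limit, i.e.\ why ``no compact approximation is needed.'' Two points worth tightening, neither of which is a substantive gap: (i) for the first (interest) parameter the recursion uses $s_{1,1}^{-1/2}$ rather than $h_{1,1}^{1/2}$ (item 2 of the Proposition), so your final induction step $j=1$ should invoke the identification $s_{1,1}^{-1/2}=h_{1,1}^{1/2}$, with $h_{1,1}$ read as the element of $H_1=S_1^{-1}$, which the Proposition asserts; and (ii) your inductive step tacitly assumes that $\log g_j$ is integrable against the product of the lower-level conditionals (respectively, over the compacts $\wedge^{(l)}$), a regularity condition inherited from the definition of the algorithm that should be stated explicitly. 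With those caveats recorded, your argument is sound and supplies exactly the derivation that the paper's citation-only treatment omits.
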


\subsection{Reference prior when $\alpha$ is the parameter of interest}\label{sepalfa}
\vspace{0.3cm}

Suppose that $\alpha$  is the parameter of interest and $\mu$ and $\phi$ are nuisance parameters, through the Corollary 3.2 and using the Fisher information matrix (\ref{mfishergg}), we have
\begin{equation}\label{prioralfa1}
\pi(\alpha)\propto \alpha, \ \ \ \  \pi(\mu|\alpha,\phi)\propto\frac{1}{\mu}, \ \ \ \ \pi(\phi|\alpha,\mu)\propto\sqrt{\psi'(\phi)}.
\end{equation}

The joint prior distribution for the ordered parameters is given by,
\begin{equation}\label{prioralfa}
\pi_\alpha(\alpha,\mu,\phi)\propto \frac{\alpha\sqrt{\psi'(\phi)}}{\mu}.
\end{equation}

Let $T_1,\ldots,T_n$ be a random sample such that $T\sim \f {GG}(\alpha,\mu,\phi)$. In this case, the likelihood function from (\ref{denspgg}) is given by,
\begin{equation}\label{verogg1} 
L(\boldsymbol{\theta};\boldsymbol{t})=\frac{\alpha^n}{\f \Gamma(\phi)^n}\mu^{n\alpha\phi}\left\{\prod_{i=1}^n{t_i^{\alpha\phi-1}}\right\}\exp\left\{-\mu^{\alpha}\sum_{i=1}^n t_i^\alpha\right\}. \end{equation}

The joint posterior distribution for $\phi, \mu$ e $\alpha$, using the prior distribution (\ref{prioralfa}), is proportional to the  prior (\ref{prioralfa}) and the product of the likelihood function (\ref{verogg1}) resulting in,
\begin{equation}\label{postalfa1}
p_\alpha(\phi,\mu,\alpha|\boldsymbol{t})=\frac{1}{d_1}\frac{\alpha^{n+1}\sqrt{\psi'(\phi)}}{\Gamma(\phi)^n}\mu^{n\alpha\phi-1}\left\{\prod_{i=1}^n{t_i^{\alpha\phi-1}}\right\}\exp\left\{-\mu^{\alpha}\sum_{i=1}^n t_i^\alpha\right\}. 
\end{equation}

\begin{proposition} The posterior density (\ref{postalfa1}) is improper since,
\begin{equation}\label{postalfa2}
d_1=\int\limits_{\mathcal{A}}\frac{\alpha^{n+1}\sqrt{\psi'(\phi)}}{\Gamma(\phi)^n}\mu^{n\alpha\phi-1}\left\{\prod_{i=1}^n{t_i^{\alpha\phi-1}}\right\}\exp\left\{-\mu^{\alpha}\sum_{i=1}^n t_i^\alpha\right\}d\boldsymbol{\theta}=\infty,
\end{equation}
where $\mathcal{A}=\{(0,\infty)\times(0,\infty)\times(0,\infty)\}$ is the parameter space for $\boldsymbol{\theta}$.
\end{proposition}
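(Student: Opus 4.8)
The plan is to use positivity of the integrand together with Tonelli's theorem to integrate out $\mu$ in closed form, reducing $d_1$ to a double integral over $(\alpha,\phi)$, and then to exhibit a subregion on which the integrand dominates a non-integrable function. First, for fixed $\alpha,\phi$ I would substitute $v=\mu^{\alpha}$ in the inner integral $\int_0^\infty \mu^{n\alpha\phi-1}\exp(-\mu^{\alpha}\sum_i t_i^{\alpha})\,d\mu$; this is a gamma integral equal to $\alpha^{-1}\Gamma(n\phi)\,S(\alpha)^{-n\phi}$, where $S(\alpha)=\sum_{i=1}^n t_i^{\alpha}$. Writing $P=\prod_{i=1}^n t_i$ and $t_{(n)}=\max_i t_i$, the claim $d_1=\infty$ becomes
\[
\int_0^\infty\!\!\!\int_0^\infty \frac{\alpha^{n}\sqrt{\psi'(\phi)}\,\Gamma(n\phi)}{\Gamma(\phi)^n}\,\frac{P^{\,\alpha\phi-1}}{S(\alpha)^{n\phi}}\;d\alpha\,d\phi=\infty .
\]

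Next I would restrict attention to the region $R=\{0<\phi<1,\ \alpha>1\}$ and construct a pointwise lower bound there. For the data-dependent factor I would use $S(\alpha)\le n\,t_{(n)}^{\alpha}$, which gives $P^{\alpha\phi}/S(\alpha)^{n\phi}\ge n^{-n}\kappa^{\alpha\phi}$ on $R$, where $\kappa=P/t_{(n)}^{\,n}=\prod_i(t_i/t_{(n)})\in(0,1)$ for non-constant data. For the prefactor I would invoke the Laurent/Stirling expansions $\psi'(\phi)\sim\phi^{-2}$, $\Gamma(\phi)\sim\phi^{-1}$ and $\Gamma(n\phi)\sim(n\phi)^{-1}$ as $\phi\to0^{+}$ to show that $\sqrt{\psi'(\phi)}\,\Gamma(n\phi)/\Gamma(\phi)^{n}$ is asymptotic to $\tfrac1n\phi^{\,n-2}$; being continuous and strictly positive on $(0,1]$ with a positive limit at $0$, it is therefore bounded below by $c\,\phi^{\,n-2}$ for some $c>0$. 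Combining these with the explicit $\alpha^{n}$ and the constant $P^{-1}$ shows that the integrand is at least $c'\,\alpha^{n}\phi^{\,n-2}\kappa^{\alpha\phi}$ on $R$.

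Finally I would integrate this lower bound. On the slab $\alpha\in[1/\phi,\,2/\phi]\subset(1,\infty)$ one has $\alpha^{n}\ge\phi^{-n}$ and $\kappa^{\alpha\phi}=\exp(\alpha\phi\log\kappa)\ge\kappa^{2}$ (since $\alpha\phi\in[1,2]$ and $0<\kappa<1$), so $\int_1^\infty \alpha^{n}\kappa^{\alpha\phi}\,d\alpha\ge \kappa^{2}\phi^{-(n+1)}$. The remaining $\phi$-integral of the bound is then a positive multiple of $\int_0^1 \phi^{\,n-2}\phi^{-(n+1)}\,d\phi=\int_0^1 \phi^{-3}\,d\phi=\infty$, which establishes $d_1=\infty$. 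A convenient feature is that the powers of $n$ cancel, so the argument is uniform in the sample size.

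The main obstacle is not any single computation but locating the offending region: the integral converges in every one-variable limit (and even in the joint regime $\phi\to\infty$, $\alpha\to0$), and the divergence materializes only in the joint regime $\alpha\to\infty$, $\phi\to0^{+}$ with $\alpha\phi$ of order one. Recognizing this scaling and tuning the $\alpha$-slab $[1/\phi,2/\phi]$ to it is the crux; the supporting technical point is the uniform lower bound on the $\Gamma$/$\psi'$ prefactor near $\phi=0$, which requires the expansions above rather than a crude estimate.
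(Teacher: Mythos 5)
Your proof is correct, and it shares its skeleton with the paper's: Tonelli, integrate out $\mu$ as a gamma integral (your reduced power $\alpha^{n}$ is the correct one; the paper's intermediate display shows $\alpha^{n-1}$, a slip it silently corrects in the next line), restrict to $0<\phi<1$, and bound $\sqrt{\psi'(\phi)}\,\Gamma(n\phi)/\Gamma(\phi)^{n}\ge c\,\phi^{n-2}$ there via the same small-$\phi$ expansions. The genuine difference is in how the divergence is then extracted. The paper keeps the data factor exactly as $e^{-nq(\alpha)\phi}$ with $q(\alpha)=\log\bigl(\sum_i t_i^{\alpha}/\sqrt[n]{\prod_i t_i^{\alpha}}\bigr)$, integrates $\phi$ over $(0,1)$ in closed form to get $\gamma\left(n-1,nq(\alpha)\right)/(nq(\alpha))^{n-1}$, and then uses the linear growth $q(\alpha)=O(\alpha)$ (together with $q(\alpha)\ge\log n$, by AM--GM) to conclude the remaining $\alpha$-integrand is at least $g_1\alpha$ on $[1,\infty)$, so the divergence appears as $\int_1^\infty\alpha\,d\alpha=\infty$. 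You discard information earlier, via $\sum_i t_i^{\alpha}\le n\,t_{(n)}^{\alpha}$, then localize to the slab $\alpha\in[1/\phi,2/\phi]$, and the divergence surfaces instead as $\int_0^1\phi^{-3}\,d\phi=\infty$. These are the same singularity --- the joint regime $\alpha\to\infty$, $\alpha\phi=O(1)$ that you correctly identify as the crux; in the paper's version it is hidden in the fact that the mass of $\int_0^1\phi^{n-2}e^{-nq(\alpha)\phi}\,d\phi$ concentrates at $\phi\lesssim 1/q(\alpha)\sim 1/\alpha$. What your route buys is elementarity and traceable constants: no incomplete-gamma functions or their asymptotics are needed, and the paper's terse final inequality (``$\ge\int_1^\infty g_1\alpha\,d\alpha$'' with unspecified $g_1$) compresses exactly the estimates you spell out. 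What the paper's route buys is brevity and reuse: the same incomplete-gamma reduction is recycled, with minor changes, in the impropriety proofs for $d_2$ and $d_4$ and, in upper-bound form, in the propriety proof for $d_5$. One pedantic point: your parenthetical ``$\kappa\in(0,1)$ for non-constant data'' excludes all-equal samples, but in that case $\kappa=1$ and the bound $\kappa^{\alpha\phi}\ge\kappa^{2}$ holds trivially, so no generality is lost.
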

\begin{proof}
Since $\frac{\alpha^{n+1}\sqrt{\psi'(\phi)}}{\Gamma(\phi)^n}\mu^{n\alpha\phi-1}\left\{\prod_{i=1}^n{t_i^{\alpha\phi-1}}\right\}\exp\left\{-\mu^{\alpha}\sum_{i=1}^n t_i^\alpha\right\}\geq 0$ by Tonelli theorem (see Folland, 1999) we have,
\begin{equation*}
\begin{aligned}
d_1&= \int\limits_0^{\infty}\int\limits_0^{\infty}\int\limits_0^{\infty}\frac{\alpha^{n+1}\sqrt{\psi'(\phi)}}{\Gamma(\phi)^n}\mu^{n\alpha\phi-1}\left\{\prod_{i=1}^n{t_i^{\alpha\phi-1}}\right\}\exp\left\{-\mu^{\alpha}\sum_{i=1}^n t_i^\alpha\right\}d\mu d\phi d\alpha \\
&=\int\limits_0^{\infty}\int\limits_0^{\infty}\alpha^{n-1}\frac{\sqrt{\psi'(\phi)}\Gamma(n\phi)}{\Gamma(\phi)^n}\frac{{\left(\prod_{i=1}^n t_i\right)}^{\alpha\phi-1}}{\left(\sum_{i=1}^n t_i^\alpha\right)^{n\phi}}d\phi d\alpha\\
& \geq \int\limits_0^{\infty}\int\limits_0^{1} c_1\alpha^{n}\phi^{n-2}\left(\frac{{\sqrt[n]{\prod_{i=1}^n t_i^\alpha}}}{\sum_{i=1}^n t_i^\alpha}\right)^{n\phi}d\phi d\alpha \\
&= \int\limits_0^{\infty}c_1\alpha^{n}\frac{\gamma\left(n-1,n \f{q}(\alpha)\right)}{\left(n \f{q}(\alpha)\right)^{n-1}}d\alpha \geq \int\limits_1^{\infty}g_1\alpha d\alpha = \infty, 
\end{aligned}
\end{equation*}
where $\f{q}(\alpha)=\log\left(\dfrac{\sum_{i=1}^n t_i^\alpha}{{\sqrt[n]{\prod_{i=1}^n t_i^\alpha}}}\right)>0$ and $c_1$ and $g_1$ are positive constants such that the above inequalities occur. \qedhere
\end{proof}

\subsection{Reference prior when $\phi$ is the parameter of interest}
\vspace{0.3cm}

When $\phi$  is the parameter of interest and $\mu$ and $\alpha$ are nuisance parameters, through the Corollary 3.2 and using the Fisher information matrix (\ref{mfishergg}), we have
\begin{equation}\label{priorphi1}
\pi(\phi)\propto \sqrt{\frac{\phi+\phi^2\psi'(\phi)-1}{\phi^2\psi'(\phi)^2-\psi'(\phi)-1}}, \ \ \ \  \pi(\mu|\phi,\alpha)\propto\frac{1}{\mu}, \ \ \ \ \pi(\alpha|\phi,\mu)\propto\frac{1}{\alpha}.
\end{equation}

The joint prior distribution for the ordered parameters is given by,
\begin{equation}\label{priorphi}
\pi_\phi(\phi,\alpha,\mu)\propto \frac{\pi(\phi)}{\alpha\mu}.
\end{equation}

The joint posterior distribution for $\phi, \mu$ e $\alpha$, using the prior distribution (\ref{priorphi}), is proportional to the product of the likelihood function (\ref{verogg1}) and the prior (\ref{prioralfa}) resulting in,
\begin{equation}\label{postphi1}
p_\phi(\phi,\mu,\alpha|\boldsymbol{t})=\frac{\pi(\phi)}{d_2}\frac{\alpha^{n-1}}{\Gamma(\phi)^n}\mu^{n\alpha\phi-1}\left\{\prod_{i=1}^n{t_i^{\alpha\phi-1}}\right\}\exp\left\{-\mu^{\alpha}\sum_{i=1}^n t_i^\alpha\right\}. 
\end{equation}

\begin{proposition}  The posterior density (\ref{postphi1}) is improper since,
\begin{equation}\label{postphi2}
d_2=\int\limits_{\mathcal{A}}\alpha^{n-1}\frac{\pi(\phi)}{\Gamma(\phi)^n}\mu^{n\alpha\phi-1}\left\{\prod_{i=1}^n{t_i^{\alpha\phi-1}}\right\}\exp\left\{-\mu^{\alpha}\sum_{i=1}^n t_i^\alpha\right\}d\boldsymbol{\theta}=\infty,
\end{equation}
where $\mathcal{A}=\{(0,\infty)\times(0,\infty)\times(0,\infty)\}$ is the parameter space for $\boldsymbol{\theta}$.
\end{proposition}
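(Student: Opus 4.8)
The plan is to proceed as in the proof of Proposition 3.4: since the integrand is nonnegative, Tonelli's theorem (as in that proof) licenses iterated integration, and I would integrate out $\mu$ first. With $w=\mu^{\alpha}$ one gets $\int_0^\infty\mu^{n\alpha\phi-1}\exp\{-\mu^{\alpha}\sum_i t_i^{\alpha}\}\,d\mu=\frac1\alpha\,\Gamma(n\phi)\big(\sum_i t_i^{\alpha}\big)^{-n\phi}$, so that, writing $\f{q}(\alpha)=\log\!\big(\sum_i t_i^{\alpha}/\sqrt[n]{\prod_i t_i^{\alpha}}\big)$ as before,
\begin{equation*}
d_2=\Big(\textstyle\prod_i t_i\Big)^{-1}\int_0^\infty\!\!\int_0^\infty \alpha^{n-2}\,\pi(\phi)\,\frac{\Gamma(n\phi)}{\Gamma(\phi)^n}\,e^{-n\phi\,\f{q}(\alpha)}\,d\phi\,d\alpha .
\end{equation*}
The decisive structural difference from Proposition 3.4 is that here the divergence does \emph{not} come from $\alpha\to\infty$ but from $\alpha\to0^{+}$: for each fixed $\alpha>0$ the inner $\phi$–integral is finite, because AM--GM gives $\f{q}(\alpha)>\log n$, so the factor $e^{-n\phi\f{q}(\alpha)}$ decays fast enough to dominate the growth of the rest of the integrand; but this integral blows up as $\alpha\to0^{+}$.

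To make this precise I would isolate three asymptotics. First, $\f{q}(\alpha)\to\log n$ as $\alpha\to0^{+}$, with the quantitative bound $\f{q}(\alpha)\le\log n+C_1\alpha^2$ on a small interval $(0,\delta)$ (a second–order expansion yields $\f{q}(\alpha)=\log n+\tfrac12\,\f{Var}(\log t)\,\alpha^2+o(\alpha^2)$). Second, Stirling's formula gives $\Gamma(n\phi)/\Gamma(\phi)^n\sim c\,n^{n\phi}\phi^{(n-1)/2}$ as $\phi\to\infty$, so the ratio is bounded below by $c_1 n^{n\phi}\phi^{(n-1)/2}$ for $\phi\ge\phi_0$. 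Third, the stated $\pi(\phi)$ is positive and in fact grows ($\psi'(\phi)\sim1/\phi$ gives $\pi(\phi)\sim c\,\phi^{3/2}$), hence $\pi(\phi)\ge c_2>0$ for $\phi\ge\phi_0$. The crux is the exact cancellation of the $n^{n\phi}$ produced by Stirling against the $n^{-n\phi}=e^{-n\phi\log n}$ that emerges from $e^{-n\phi\f{q}(\alpha)}$ in the limit $\f{q}(\alpha)\to\log n$: what remains is only polynomial growth in $\phi$, and it is precisely this that makes the $\phi$–integral diverge as $\alpha\to0^{+}$.

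Using these bounds and $e^{-n\phi\f{q}(\alpha)}\ge n^{-n\phi}e^{-nC_1\alpha^2\phi}$, restriction to the box $\alpha\in(0,\delta)$, $\phi\in(\phi_0,\infty)$ gives
\begin{equation*}
d_2\ge c_3\int_0^\delta \alpha^{n-2}\!\left(\int_{\phi_0}^\infty \phi^{(n-1)/2}\,e^{-nC_1\alpha^2\phi}\,d\phi\right)d\alpha .
\end{equation*}
The substitution $u=nC_1\alpha^2\phi$ turns the inner integral into $(nC_1\alpha^2)^{-(n+1)/2}\int_{nC_1\alpha^2\phi_0}^{\infty}u^{(n-1)/2}e^{-u}\,du$, which is bounded below by $c_4\,\alpha^{-(n+1)}$ for $\alpha$ small. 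Hence $d_2\ge c_5\int_0^\delta \alpha^{n-2}\alpha^{-(n+1)}\,d\alpha=c_5\int_0^\delta\alpha^{-3}\,d\alpha=\infty$, proving impropriety. (The conclusion is robust to the exact power of $\alpha$ carried through the $\mu$–integration, since any power $\le-3$ near $0$ is already non–integrable.)

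The main obstacle is the asymptotic bookkeeping rather than a single hard estimate. One must (i) keep the polynomial factor $\phi^{(n-1)/2}$ in the Stirling lower bound — dropping it leaves only $\int_{\phi_0}^\infty e^{-nC_1\alpha^2\phi}\,d\phi\asymp\alpha^{-2}$, which yields divergence only for $n\le3$ — and (ii) pin down the rate $\f{q}(\alpha)-\log n\asymp\alpha^2$, since it is the balance between this rate and the surviving polynomial growth in $\phi$ that decides whether the weighted $\alpha$–integral converges. Recognizing that the relevant region is $\phi\to\infty$ and $\alpha\to0^{+}$ \emph{simultaneously}, rather than the fixed slab $\phi\in(0,1)$ used in Proposition 3.4, is the conceptual step that makes the argument work.
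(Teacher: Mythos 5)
Your proof is correct and follows essentially the same route as the paper's: integrate out $\mu$ via a gamma integral, restrict to the region where $\phi$ is large, use Stirling to get $\Gamma(n\phi)/\Gamma(\phi)^n\gtrsim n^{n\phi}\phi^{(n-1)/2}$ so that $n^{n\phi}$ cancels against the $e^{-n\phi\log n}$ coming from $\f{q}(\alpha)\to\log n$, exploit the rate $\f{q}(\alpha)-\log n\asymp\alpha^2$, and conclude non-integrability as $\alpha\to0^{+}$. Indeed, your write-up is the corrected form of the paper's own argument, whose printed version contains typos (its $\phi$-integral should run over $(1,\infty)$ rather than $(0,1)$, matching the upper incomplete gamma $\Gamma\bigl(\tfrac{n}{2}+2,n\f{p}(\alpha)\bigr)$ it displays, and its final bound should be $\int_0^1 g_2\,\alpha^{-6}\,d\alpha=\infty$, since $\int_1^\infty\alpha^{-6}\,d\alpha$ is finite); your use of only $\pi(\phi)\ge c_2$ instead of $\pi(\phi)\gtrsim\phi^{3/2}$ yields the exponent $\alpha^{-3}$ in place of the paper's $\alpha^{-6}$, which is equally non-integrable at the origin.
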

\begin{proof}
Since $\alpha^{n-1}\frac{\pi(\phi)}{\Gamma(\phi)^n}\mu^{n\alpha\phi-1}\left\{\prod_{i=1}^n{t_i^{\alpha\phi-1}}\right\}\exp\left\{-\mu^{\alpha}\sum_{i=1}^n t_i^\alpha\right\}\geq 0$ by Tonelli theorem (see Folland, 1999) we have,
\begin{equation*}
\begin{aligned}
d_2&= \int\limits_0^{\infty}\int\limits_0^{\infty}\int\limits_0^{\infty}\alpha^{n-1}\frac{\pi(\phi)}{\Gamma(\phi)^n}\mu^{n\alpha\phi-1}\left\{\prod_{i=1}^n{t_i^{\alpha\phi-1}}\right\}\exp\left\{-\mu^{\alpha}\sum_{i=1}^n t_i^\alpha\right\}d\mu d\phi d\alpha \\
&=\int\limits_0^{\infty}\int\limits_0^{\infty}\alpha^{n-2}\frac{\pi(\phi)\Gamma(n\phi)}{\phi\Gamma(\phi)^n}\frac{{\left(\prod_{i=1}^n t_i\right)}^{\alpha\phi-1}}{\left(\sum_{i=1}^n t_i^\alpha\right)^{n\phi}}d\phi d\alpha\\
& > \int\limits_0^{\infty}\int\limits_0^{1} c_2\alpha^{n-2}\phi^{\frac{n}{2}+1}\left(\frac{{\sqrt[n]{\prod_{i=1}^n t_i^\alpha}}}{\sum_{i=1}^n t_i^\alpha}\right)^{n\phi}d\phi d\alpha \\
&= \int\limits_0^{\infty}c_2\alpha^{n-2}\frac{\Gamma\left(\frac{n}{2}+2,n \f{p}(\alpha)\right)}{\left(n \f{p}(\alpha)\right)^{\frac{n}{2}+2}}d\alpha \geq \int\limits_1^{\infty}g_2\alpha^{-6}d\alpha = \infty, 
\end{aligned}
\end{equation*}
where $\f{p}(\alpha)=\log\left(\dfrac{\frac{1}{n}\sum_{i=1}^n t_i^\alpha}{{\sqrt[n]{\prod_{i=1}^n t_i^\alpha}}}\right)>0$ and $c_2$ and $g_2$ are positive constants such that the above inequalities occur. \qedhere
\end{proof}


\subsection{Reference prior when $\mu$ is the parameter of interest}
\vspace{0.3cm}

Consider the case when $\mu$ is the parameter of interest and $\phi$ and $\alpha$ are nuisance parameters, through the Corollary 3.2, we have
\begin{equation}\label{priormi1}
\pi(\mu)\propto \mu, \ \ \ \  \pi(\alpha|\mu,\phi)\propto\frac{1}{\alpha}, \ \ \ \ \pi(\phi|\alpha,\mu)\propto\sqrt{\psi'(\phi)}.
\end{equation}

The joint prior distribution for the ordered parameters is given by,
\begin{equation}\label{priormi}
\pi_\mu(\mu,\alpha,\phi)\propto \frac{\mu\sqrt{\psi'(\phi)}}{\alpha}.
\end{equation}

The joint posterior distribution for $\mu, \alpha$ e $\phi$, using the prior distribution (\ref{priormi}), is given by
\begin{equation}\label{postmu1}
p_\mu(\mu,\alpha,\phi|\boldsymbol{t})=\frac{1}{d_3}\frac{\alpha^{n-1}\sqrt{\psi'(\phi)}}{\Gamma(\phi)^n}\mu^{n\alpha\phi+1}\left\{\prod_{i=1}^n{t_i^{\alpha\phi-1}}\right\}\exp\left\{-\mu^{\alpha}\sum_{i=1}^n t_i^\alpha\right\}. 
\end{equation}

\begin{proposition}  The posterior density (\ref{postmu1}) is improper since,
\begin{equation}\label{postmu2}
d_3=\int\limits_{\mathcal{A}}\frac{\alpha^{n-1}\sqrt{\psi'(\phi)}}{\Gamma(\phi)^n}\mu^{n\alpha\phi+1}\left\{\prod_{i=1}^n{t_i^{\alpha\phi-1}}\right\}\exp\left\{-\mu^{\alpha}\sum_{i=1}^n t_i^\alpha\right\}d\boldsymbol{\theta}=\infty,
\end{equation}
where $\mathcal{A}=\{(0,\infty)\times(0,\infty)\times(0,\infty)\}$ is the parameter space for $\boldsymbol{\theta}$.
\end{proposition}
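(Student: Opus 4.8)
The plan is to follow the same template as the two preceding propositions: use nonnegativity to invoke Tonelli, perform the $\mu$-integration in closed form, and reduce the remaining two-dimensional integral to a one-dimensional integral in $\alpha$ that manifestly diverges. The decisive feature here is the exponent $\mu^{n\alpha\phi+1}$. Writing $S(\alpha)=\sum_{i=1}^n t_i^\alpha$ and substituting $u=\mu^\alpha$, one gets
$$\int_0^\infty \mu^{n\alpha\phi+1}e^{-\mu^\alpha S(\alpha)}\,d\mu=\frac1\alpha\,\frac{\Gamma\!\left(n\phi+\tfrac2\alpha\right)}{S(\alpha)^{\,n\phi+2/\alpha}},$$
so the $\mu$-integration now produces the factor $\Gamma(n\phi+2/\alpha)$ rather than the $\Gamma(n\phi)$ seen when $\alpha$ was of interest. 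This is the crux: that factor explodes as $\alpha\to0^+$, and accordingly the divergence of $d_3$ will be located at the origin $\alpha\to0$, in contrast with the blow-up at $\alpha\to\infty$ that drove the case of $\pi_\alpha$.

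After this step, and absorbing the constant $(\prod_{i=1}^n t_i)^{-1}$, the normalizing constant reads
$$d_3=\int_0^\infty\!\!\int_0^\infty \alpha^{\,n-2}\,\frac{\sqrt{\psi'(\phi)}}{\Gamma(\phi)^n}\,\Gamma\!\left(n\phi+\tfrac2\alpha\right)\frac{\left(\prod_{i=1}^n t_i\right)^{\alpha\phi-1}}{S(\alpha)^{\,n\phi+2/\alpha}}\,d\phi\,d\alpha.$$
I would then restrict the (nonnegative) integrand to $\alpha\in(0,1)$ and $\phi$ in a fixed compact interval, say $\phi\in[1,2]$, which can only decrease the value. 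On $\alpha\le1$ we have $2/\alpha\ge2$, which lies on the increasing branch of the gamma function, so $\Gamma(n\phi+2/\alpha)\ge\Gamma(2/\alpha)$. Splitting $S(\alpha)^{n\phi+2/\alpha}=S(\alpha)^{n\phi}S(\alpha)^{2/\alpha}$ and using $(\prod t_i)^{\alpha\phi}S(\alpha)^{-n\phi}=e^{-n\phi\,\f{q}(\alpha)}$, the purely $\phi$-dependent factor $\sqrt{\psi'(\phi)}\,\Gamma(\phi)^{-n}e^{-n\phi\,\f{q}(\alpha)}$ extends continuously and strictly positively to $\alpha=0$ on the compact box $[1,2]\times[0,1]$ (since $\f{q}(\alpha)\to\log n$), hence is bounded below by a positive constant there.

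These bounds collapse the estimate to
$$d_3\ \ge\ c\int_0^1 \alpha^{\,n-2}\,\frac{\Gamma(2/\alpha)}{S(\alpha)^{2/\alpha}}\,d\alpha$$
for some $c>0$. For $\alpha<1$ one has $S(\alpha)=\sum t_i^\alpha\le n\max(1,\max_i t_i)=:B$, so $S(\alpha)^{2/\alpha}\le B^{2/\alpha}$, and the substitution $x=2/\alpha$ turns the right-hand side into a positive multiple of $\int_2^\infty x^{-n}\,\Gamma(x)\,B^{-x}\,dx$. The conclusion follows by Stirling: $\log\!\big(x^{-n}\Gamma(x)B^{-x}\big)=\log\Gamma(x)-x\log B-n\log x\sim x\log x\to+\infty$, so the integrand itself tends to infinity and the integral diverges, giving $d_3=\infty$. (If one prefers to mirror the earlier proofs verbatim, one may instead restrict $\phi\in(0,1)$, use $\sqrt{\psi'(\phi)}\,\Gamma(\phi)^{-n}\ge c\,\phi^{\,n-1}$, and produce a lower incomplete gamma $\gamma(n,n\f{q}(\alpha))(n\f{q}(\alpha))^{-n}$, which stays bounded below on $(0,1)$; the same final comparison then applies.)

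The main obstacle is precisely this last asymptotic comparison: one must argue that the super-exponential growth of $\Gamma(2/\alpha)$, of order $(2/\alpha)!$, strictly dominates the merely exponential growth $S(\alpha)^{2/\alpha}\approx n^{2/\alpha}$ as $\alpha\to0^+$; equivalently, that $\log\Gamma$ is superlinear. This is exactly what forces divergence at the origin, whereas the bounded-in-$\alpha$ factor $\Gamma(n\phi)$ arising for $\pi_\alpha$ did not. A secondary technical point is the monotonicity bound $\Gamma(n\phi+2/\alpha)\ge\Gamma(2/\alpha)$, which is why I restrict to $\alpha\le1$ so the argument stays on the increasing branch of $\Gamma$. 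Finally, the degenerate subcase in which all $t_i$ coincide needs only a remark: there $\f{q}(\alpha)\equiv\log n$ and $S(\alpha)^{2/\alpha}=n^{2/\alpha}t^2$, so the identical comparison goes through and the argument is robust.
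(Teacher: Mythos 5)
Your proof is correct, but the decisive step is genuinely different from the paper's. Both proofs share the same skeleton — nonnegativity plus Tonelli, integrating out $\mu$ to produce the factor $\Gamma\left(n\phi+\tfrac{2}{\alpha}\right)\big/\left(\sum_{i}t_i^\alpha\right)^{n\phi+2/\alpha}$, and locating the divergence at $\alpha\to 0^{+}$ — but the divergence mechanisms differ. The paper keeps the full range $\phi\in(1,\infty)$, applies Stirling-type lower bounds to $\Gamma\left(n\phi+\tfrac{2}{\alpha}\right)$ while tracking the $\phi$-dependence, uses the super-exponential factor only to dominate constants (restricting to $\alpha<2/(ne)$ so that $\left(\tfrac{2}{n\alpha}\right)^{2/\alpha}e^{-2/\alpha}\geq 1$), and then lets the inner $\phi$-integral produce $\Gamma\left(\tfrac{n+1}{2},n p(\alpha)\right)\big/(n p(\alpha))^{(n+1)/2}$ with $p(\alpha)=\log\left(\tfrac{1}{n}\sum_i t_i^\alpha\big/\sqrt[n]{\prod_i t_i^\alpha}\right)$; the final blow-up $\alpha^{-5/2}$ comes from the degeneracy $p(\alpha)\asymp\alpha^{2}$ as $\alpha\to 0$. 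You instead freeze $\phi$ in a compact box, bound the entire $\phi$-dependent part below by a positive constant via continuity, and let the ratio $\Gamma(2/\alpha)\big/B^{2/\alpha}$ itself drive the divergence (super-exponential versus exponential growth). Your route buys several things: it is more elementary (monotonicity of $\Gamma$ on its increasing branch, compactness, and one Stirling limit, with no incomplete-gamma asymptotics and no need for the rate at which the arithmetic--geometric mean gap closes); it is insensitive to the degenerate case where all $t_i$ coincide, which in the paper's formulation makes $p(\alpha)\equiv 0$ and requires a separate (trivial) remark; and it sidesteps two blemishes in the paper's displayed chain — a stray $\mu$ appearing after $\mu$ has been integrated out, and the silent disappearance of the data-dependent factor $\left(\sum_i t_i^\alpha\right)^{2/\alpha}$ between the second and third lines, which as written makes that intermediate inequality invalid (it is fixable exactly by the comparison against $B^{2/\alpha}$ that you carry out explicitly). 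What the paper's longer route buys in exchange is sharper information: it exhibits the polynomial rate $\alpha^{-5/2}$ of the marginal divergence, whereas your bound only certifies divergence.
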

\begin{proof}
Since $\frac{\alpha^{n-1}\sqrt{\psi'(\phi)}}{\Gamma(\phi)^n}\mu^{n\alpha\phi+1}\left\{\prod_{i=1}^n{t_i^{\alpha\phi-1}}\right\}\exp\left\{-\mu^{\alpha}\sum_{i=1}^n t_i^\alpha\right\}\geq 0$ by Tonelli theorem (see Folland, 1999) we have,
\begin{equation*}
\begin{aligned}
d_3&= \int\limits_0^{\infty}\int\limits_0^{\infty}\int\limits_0^{\infty}\frac{\alpha^{n-1}\sqrt{\psi'(\phi)}}{\Gamma(\phi)^n}\mu^{n\alpha\phi+1}\left\{\prod_{i=1}^n{t_i^{\alpha\phi-1}}\right\}\exp\left\{-\mu^{\alpha}\sum_{i=1}^n t_i^\alpha\right\}d\mu d\phi d\alpha \\
&=\int\limits_0^{\infty}\int\limits_0^{\infty}\alpha^{n-2}\frac{\sqrt{\psi'(\phi)}\Gamma(n\phi+\frac{2}{\alpha})}{\Gamma(\phi)^n}\frac{{\left(\prod_{i=1}^n t_i\right)}^{\alpha\phi-1}}{\left(\sum_{i=1}^n t_i^\alpha\right)^{n\phi+\frac{2}{\alpha}}}d\phi d\alpha\\
& > \int\limits_0^{\infty}\int\limits_1^{\infty} c_3\alpha^{n-2}\phi^{\frac{n-1}{2}}\left(1+\frac{2}{n\alpha\mu}\right)^{n\phi}\left(\phi+\frac{2}{n\alpha}\right)^{\frac{2}{\alpha}-\frac{1}{2}}e^{-\frac{2}{\alpha}}\frac{{\left(\prod_{i=1}^n t_i\right)}^{\alpha\phi-1}}{\left(\sum_{i=1}^n t_i^\alpha\right)^{n\phi}}  d\phi d\alpha \\
& > \int\limits_0^{\infty}\int\limits_1^{\infty} c_3\alpha^{n-\frac{3}{2}}\phi^{\frac{n-1}{2}}\left(\frac{2}{n\alpha}\right)^{\frac{2}{\alpha}}e^{-\frac{2}{\alpha}}\frac{{\left(\prod_{i=1}^n t_i\right)}^{\alpha\phi-1}}{\left(\sum_{i=1}^n t_i^\alpha\right)^{n\phi}}  d\phi d\alpha \\
& > \int\limits_0^{\frac{2}{ne}}\int\limits_1^{\infty} c_3\alpha^{n-\frac{3}{2}}\phi^{\frac{n-1}{2}}\frac{{\left(\prod_{i=1}^n t_i\right)}^{\alpha\phi-1}}{\left(\sum_{i=1}^n t_i^\alpha\right)^{n\phi}}  d\phi d\alpha \\
&= \int\limits_0^{\frac{2}{ne}}c_3\alpha^{n-\frac{3}{2}}\frac{\Gamma\left(\frac{n+1}{2},n \f{p}(\alpha)\right)}{\left(n \f{p}(\alpha)\right)^{\frac{n+1}{2}}}d\alpha \geq \int\limits_0^{\frac{2}{ne}}g_3\alpha^{-\tfrac{5}{2}}d\alpha = \infty, 
\end{aligned}
\end{equation*}
where $c_3$ e $g_3$ are positive constants such that the above inequalities occur. \qedhere
\end{proof}

\subsection{Reference prior where $\alpha,\mu$ and $\phi$ are the ordered parameters}\label{sepalfao}
\vspace{0.3cm}

For the situation where the parameters of interest are ordered by $\boldsymbol\theta=(\alpha,\mu,\phi)$, through the Corollary 3.3 and using the Fisher information matrix (\ref{mfishergg}), we have
\begin{equation}\label{prioralfao1}
\pi(\alpha)\propto \frac{1}{\alpha}, \ \ \ \  \pi(\mu|\alpha,\phi)\propto\frac{1}{\mu}, \ \ \ \ \pi(\phi|\alpha,\mu)\propto\sqrt{\psi'(\phi)}.
\end{equation}

The joint prior distribution for the ordered parameters is given by,
\begin{equation}\label{prioralfao}
\pi_{\boldsymbol\theta}(\alpha,\mu,\phi)\propto \frac{\sqrt{\psi'(\phi)}}{\alpha\mu}.
\end{equation}

The joint posterior distribution for $\phi, \mu$ and $\alpha$, using the prior distribution (\ref{prioralfao}), is given by,
\begin{equation}\label{postalfao1}
p_{\boldsymbol\theta}(\phi,\mu,\alpha|\boldsymbol{t})=\frac{1}{d_4}\frac{\alpha^{n-1}\sqrt{\psi'(\phi)}}{\Gamma(\phi)^n}\mu^{n\alpha\phi-1}\left\{\prod_{i=1}^n{t_i^{\alpha\phi-1}}\right\}\exp\left\{-\mu^{\alpha}\sum_{i=1}^n t_i^\alpha\right\}. 
\end{equation}

\begin{proposition}  The posterior density (\ref{postalfao1}) is improper since, 
\begin{equation}\label{postalfao2}
d_4=\int\limits_{\mathcal{A}}\frac{\alpha^{n-1}\sqrt{\psi'(\phi)}}{\Gamma(\phi)^n}\mu^{n\alpha\phi-1}\left\{\prod_{i=1}^n{t_i^{\alpha\phi-1}}\right\}\exp\left\{-\mu^{\alpha}\sum_{i=1}^n t_i^\alpha\right\}d\boldsymbol{\theta}=\infty,
\end{equation}
where $\mathcal{A}=\{(0,\infty)\times(0,\infty)\times(0,\infty)\}$ is the parameter space for $\boldsymbol{\theta}$.
\end{proposition}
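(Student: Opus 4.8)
The plan is to mimic the argument already used for the posterior (\ref{postalfa2}), since (\ref{postalfao1}) differs from (\ref{postalfa1}) only through the power of $\alpha$ carried by the prior ($\alpha^{-1}$ here against $\alpha$ there), which turns the $\alpha^{n+1}$ factor into $\alpha^{n-1}$. Because the integrand is nonnegative, Tonelli's theorem lets me integrate in the order $\mu,\phi,\alpha$. First I would perform the $\mu$-integration by the substitution $w=\mu^{\alpha}\sum_{i=1}^n t_i^\alpha$, which gives $\int_0^\infty \mu^{n\alpha\phi-1}e^{-\mu^\alpha\sum t_i^\alpha}\,d\mu=\alpha^{-1}\Gamma(n\phi)(\sum t_i^\alpha)^{-n\phi}$ and reduces $d_4$ to
\begin{equation*}
d_4 = \int_0^\infty\int_0^\infty \alpha^{n-2}\,\frac{\sqrt{\psi'(\phi)}\,\Gamma(n\phi)}{\Gamma(\phi)^n}\,\frac{\left(\prod_{i=1}^n t_i\right)^{\alpha\phi-1}}{\left(\sum_{i=1}^n t_i^\alpha\right)^{n\phi}}\,d\phi\,d\alpha .
\end{equation*}

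Next I would bound the $\phi$-dependent factor from below on the strip $\phi\in(0,1)$. Using $\psi'(\phi)\sim\phi^{-2}$, $\Gamma(n\phi)\sim(n\phi)^{-1}$ and $\Gamma(\phi)^n\sim\phi^{-n}$ as $\phi\to 0^+$, the quotient $\sqrt{\psi'(\phi)}\,\Gamma(n\phi)/\Gamma(\phi)^n$ is asymptotic to a constant multiple of $\phi^{n-2}$, so by continuity and positivity on $(0,1]$ it is bounded below by $c_4\,\phi^{n-2}$ for some $c_4>0$. Absorbing the constant $\left(\prod t_i\right)^{-1}$ and writing $\left(\sqrt[n]{\prod t_i^\alpha}/\sum t_i^\alpha\right)^{n\phi}=e^{-n\phi\,\f{q}(\alpha)}$ with $\f{q}(\alpha)=\log\!\big(\sum t_i^\alpha/\sqrt[n]{\prod t_i^\alpha}\big)>0$, the inner $\phi$-integral becomes an incomplete-gamma expression,
\begin{equation*}
d_4 \ge \int_0^\infty c_4\,\alpha^{n-2}\int_0^1 \phi^{n-2}e^{-n\phi\,\f{q}(\alpha)}\,d\phi\,d\alpha = \int_0^\infty c_4\,\alpha^{n-2}\,\frac{\gamma\!\left(n-1,\,n\,\f{q}(\alpha)\right)}{\left(n\,\f{q}(\alpha)\right)^{n-1}}\,d\alpha .
\end{equation*}

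Finally I would analyse the $\alpha$-integral as $\alpha\to\infty$, and this is where the only genuine difficulty lies. From $\sum t_i^\alpha\le n\,t_{(n)}^\alpha$, with $t_{(n)}=\max_i t_i$, one obtains the linear upper bound $\f{q}(\alpha)\le\log n+\alpha\,C$, where $C=\log t_{(n)}-\tfrac1n\sum\log t_i>0$ whenever the $t_i$ are not all equal, so $\left(n\,\f{q}(\alpha)\right)^{n-1}=O(\alpha^{n-1})$; meanwhile $\f{q}(\alpha)\to\infty$ forces $\gamma(n-1,n\,\f{q}(\alpha))\to\Gamma(n-1)>0$, so the incomplete-gamma ratio is bounded below by a constant multiple of $\alpha^{-(n-1)}$ for large $\alpha$. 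Combining these estimates, the $\alpha$-integrand is bounded below by $g_4\,\alpha^{n-2}\alpha^{-(n-1)}=g_4\,\alpha^{-1}$, whence $d_4\ge\int_{\alpha_0}^\infty g_4\,\alpha^{-1}\,d\alpha=\infty$. The main obstacle, and the real contrast with the proof of (\ref{postalfa2}), is that the reduced power $\alpha^{n-2}$ (rather than $\alpha^{n}$) produces only a logarithmic, harmonic-type divergence at infinity; consequently the upper bound $\f{q}(\alpha)=O(\alpha)$ must be sharp enough to guarantee exactly the $\alpha^{-1}$ tail, since any overestimate of $\f{q}(\alpha)$ would risk yielding a summable bound and losing the divergence.
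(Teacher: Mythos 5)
Your proof is correct and takes essentially the same route as the paper's: Tonelli's theorem, integrating out $\mu$ to reduce to a double integral, bounding the $\phi$-factor below by $c\,\phi^{n-2}e^{-n\phi\,q(\alpha)}$ on $\phi\in(0,1)$ so the inner integral becomes the ratio $\gamma\left(n-1,n\,q(\alpha)\right)/\left(n\,q(\alpha)\right)^{n-1}$, and then using $q(\alpha)=O(\alpha)$ together with $\gamma\left(n-1,n\,q(\alpha)\right)\to\Gamma(n-1)$ to obtain a nonintegrable $\alpha^{-1}$ tail. Your bookkeeping is in fact more careful than the paper's, whose displayed intermediate steps carry $\alpha^{n-1}$ where the $\mu$-integration actually yields $\alpha^{n-2}$, although its final bound $\int_1^{\infty}g_1\alpha^{-1}\,d\alpha=\infty$ agrees with your correct computation.
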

\begin{proof}
 Since $\frac{\alpha^{n-1}\sqrt{\psi'(\phi)}}{\Gamma(\phi)^n}\mu^{n\alpha\phi-1}\left\{\prod_{i=1}^n{t_i^{\alpha\phi-1}}\right\}\exp\left\{-\mu^{\alpha}\sum_{i=1}^n t_i^\alpha\right\}\geq 0$ by Tonelli theorem (see Folland, 1999) we have,
\begin{equation*}
\begin{aligned}
d_4&= \int\limits_0^{\infty}\int\limits_0^{\infty}\int\limits_0^{\infty}\frac{\alpha^{n-1}\sqrt{\psi'(\phi)}}{\Gamma(\phi)^n}\mu^{n\alpha\phi-1}\left\{\prod_{i=1}^n{t_i^{\alpha\phi-1}}\right\}\exp\left\{-\mu^{\alpha}\sum_{i=1}^n t_i^\alpha\right\}d\mu d\phi d\alpha \\
&=\int\limits_0^{\infty}\int\limits_0^{\infty}\alpha^{n-1}\frac{\sqrt{\psi'(\phi)}\Gamma(n\phi)}{\Gamma(\phi)^n}\frac{{\left(\prod_{i=1}^n t_i\right)}^{\alpha\phi-1}}{\left(\sum_{i=1}^n t_i^\alpha\right)^{n\phi}}d\phi d\alpha\\
& \geq \int\limits_0^{\infty}\int\limits_0^{1} c_1\alpha^{n-1}\phi^{n-2}\left(\frac{{\sqrt[n]{\prod_{i=1}^n t_i^\alpha}}}{\sum_{i=1}^n t_i^\alpha}\right)^{n\phi}d\phi d\alpha \\
&= \int\limits_0^{\infty}c_1\alpha^{n-1}\frac{\gamma\left(n-1,n \f{q}(\alpha)\right)}{\left(n \f{q}(\alpha)\right)^{n-1}}d\alpha \geq \int\limits_1^{\infty}g_1\frac{1}{\alpha} d\alpha = \infty, 
\end{aligned}
\end{equation*}
where $c_1$ and $g_1$ are positive constants such that the above inequalities occur. \qedhere
\end{proof}

\subsection{Modified Reference Prior}
\vspace{0.3cm}

Consider the case discuss in the Section \ref{sepalfao} where the parameters of interest are ordered by $\boldsymbol\theta$, a modification in $\pi(\alpha)$ was performed so that
\begin{equation}\label{prioralfamod1}
\pi(\alpha)\propto \frac{1}{\alpha^{-\tfrac{1}{2}+\tfrac{2\alpha}{1+\alpha}}}.
\end{equation}

This modification was made conveniently of such way that $\pi(\alpha)$ is proper, ie,
\begin{equation}\label{prioralfamod2}
\int_{0}^{\infty}\pi(\alpha)d\alpha=\int_{0}^{\infty} \frac{k^{-1}}{\alpha^{-\tfrac{1}{2}+\tfrac{2\alpha}{1+\alpha}}}d\alpha=1 \mbox{ e } k=\int_{0}^{\infty} \frac{1}{\alpha^{-\tfrac{1}{2}+\tfrac{2\alpha}{1+\alpha}}}d\alpha<\infty .
\end{equation}

The joint prior distribution for the ordered parameters is given by,
\begin{equation}\label{prioralfam}
\pi_M(\alpha,\mu,\phi)\propto \frac{\sqrt{\psi'(\phi)}}{\alpha^{-\tfrac{1}{2}+\tfrac{2\alpha}{1+\alpha}}\mu}.
\end{equation}

The joint posterior distribution for $\phi, \mu$ and $\alpha$, using the prior distribution (\ref{prioralfamod2}), is given by,
\begin{equation}\label{postalfam1}
p_R(\alpha,\phi,\mu|\boldsymbol{t})=\frac{1}{d_5}\frac{\alpha^{n+\tfrac{1}{2}-\tfrac{2\alpha}{1+\alpha}}\sqrt{\psi'(\phi)}}{\Gamma(\phi)^n}\mu^{n\alpha\phi-1}\left\{\prod_{i=1}^n{t_i^{\alpha\phi-1}}\right\}\exp\left\{-\mu^{\alpha}\sum_{i=1}^n t_i^\alpha\right\}. 
\end{equation}
\begin{theorem} The posterior density (\ref{postalfao1}) is proper since,
\begin{equation}\label{postalfamc1}
d_5=\int\limits_{\mathcal{A}}\frac{\alpha^{n+\tfrac{1}{2}-\tfrac{2\alpha}{1+\alpha}}\sqrt{\psi'(\phi)}}{\Gamma(\phi)^n}\mu^{n\alpha\phi-1}\left\{\prod_{i=1}^n{t_i^{\alpha\phi-1}}\right\}\exp\left\{-\mu^{\alpha}\sum_{i=1}^n t_i^\alpha\right\}d\boldsymbol{\theta}<\infty,
\end{equation}
where $\mathcal{A}=\{(0,\infty)\times(0,\infty)\times(0,\infty)\}$ is the parameter space for $\boldsymbol{\theta}$.
\end{theorem}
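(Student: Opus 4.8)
The plan is to reduce the triple integral to a single integral in $\alpha$ by carrying out the inner integrations exactly (or with sharp bounds) and then to check integrability only at the two endpoints $\alpha\to 0^+$ and $\alpha\to\infty$, which is precisely where the modification of $\pi(\alpha)$ is designed to act. Since the integrand is nonnegative, Tonelli's theorem permits integrating in any order. First I would integrate out $\mu$: with $S=\sum_{i=1}^n t_i^\alpha$, the substitution $v=\mu^\alpha$ gives $\int_0^\infty \mu^{n\alpha\phi-1}e^{-\mu^\alpha S}\,d\mu=\Gamma(n\phi)/(\alpha S^{n\phi})$, so that
\[
d_5=\int_0^\infty\int_0^\infty \alpha^{\,n-\frac12-\frac{2\alpha}{1+\alpha}}\,\frac{\sqrt{\psi'(\phi)}\,\Gamma(n\phi)}{\Gamma(\phi)^n}\,\frac{(\prod_{i=1}^n t_i)^{\alpha\phi-1}}{(\sum_{i=1}^n t_i^\alpha)^{n\phi}}\,d\phi\,d\alpha .
\]
Writing the last ratio as $(\prod_{i=1}^n t_i)^{-1}e^{-nq(\alpha)\phi}$, with $q(\alpha)=\log\bigl(\sum_{i=1}^n t_i^\alpha/\sqrt[n]{\prod_{i=1}^n t_i^\alpha}\bigr)$ the quantity of Section~\ref{sepalfa}, the task becomes to bound the inner $\phi$-integral as a function of $\alpha$ and then to integrate in $\alpha$.

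The key step is to split the $\phi$-integral at $\phi=1$ and bound $\sqrt{\psi'(\phi)}\,\Gamma(n\phi)/\Gamma(\phi)^n$ on each piece. On $(0,1)$ the left-hand side extends continuously to $\phi=0$ and satisfies $\sqrt{\psi'(\phi)}\,\Gamma(n\phi)/\Gamma(\phi)^n\le A\,\phi^{\,n-2}$, which yields
\[
\int_0^1(\cdots)\,d\phi\le A'\,\frac{\gamma(n-1,\,nq(\alpha))}{(nq(\alpha))^{\,n-1}} .
\]
On $(1,\infty)$, Stirling's formula gives a uniform bound $\sqrt{\psi'(\phi)}\,\Gamma(n\phi)/\Gamma(\phi)^n\le B\,n^{n\phi}\phi^{\frac{n-2}{2}}$; the crucial feature is that the factor $n^{n\phi}$ absorbs the $\log n$ in the exponential, turning $e^{-nq(\alpha)\phi}$ into $e^{-np(\alpha)\phi}$ where $p(\alpha)=q(\alpha)-\log n$ is positive by the arithmetic--geometric mean inequality (assuming the $t_i$ are not all equal). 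Hence
\[
\int_1^\infty(\cdots)\,d\phi\le B'\,\frac{\Gamma(\tfrac{n}{2},\,np(\alpha))}{(np(\alpha))^{n/2}} .
\]

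It then remains to show that $\int_0^\infty \alpha^{\,n-\frac12-\frac{2\alpha}{1+\alpha}}$ times each incomplete-gamma expression is finite, which I would do by an endpoint analysis. As $\alpha\to 0^+$ a Taylor expansion gives $p(\alpha)\sim \tfrac12\operatorname{Var}(\log t_i)\,\alpha^2$ while $q(\alpha)\to\log n$; since $\tfrac{2\alpha}{1+\alpha}\to 0$, the $(0,1)$ contribution behaves like $\alpha^{\,n-1/2}$ and the $(1,\infty)$ contribution like $\alpha^{\,n-1/2}(np(\alpha))^{-n/2}\sim \alpha^{-1/2}$, both integrable at $0$. As $\alpha\to\infty$ both $p(\alpha)$ and $q(\alpha)$ grow linearly (dominated by $\max_i t_i$), so $\gamma(n-1,nq(\alpha))\to\Gamma(n-1)$ and the $(0,1)$ term behaves like $\alpha^{\,n-5/2}q(\alpha)^{-(n-1)}\sim \alpha^{-3/2}$, while $\Gamma(\tfrac n2,np(\alpha))\sim(np(\alpha))^{n/2-1}e^{-np(\alpha)}$ makes the $(1,\infty)$ term decay exponentially; both are integrable at $\infty$. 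Here the modified exponent does the essential work: near $\alpha=\infty$ it replaces the divergent $\alpha^{-1}$ of the unmodified posterior of Section~\ref{sepalfao} by the integrable $\alpha^{-3/2}$, and near $\alpha=0$ it lifts the power of $\alpha$ just enough to absorb the $p(\alpha)^{-n/2}\sim\alpha^{-n}$ singularity.

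The main obstacle I anticipate is establishing the two bounds on $\sqrt{\psi'(\phi)}\,\Gamma(n\phi)/\Gamma(\phi)^n$ rigorously --- in particular controlling the Stirling correction terms on $(1,\infty)$ so that the constant $B$ is genuinely independent of $\phi$ --- together with checking that the specific modification makes the powers of $\alpha$ balance at \emph{both} endpoints simultaneously. Everything past that is a routine integrability check, valid for $n\ge 2$ and for data that are not all identical.
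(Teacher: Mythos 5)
Your proposal is correct and takes essentially the same route as the paper's own proof: integrate out $\mu$ via Tonelli to obtain $\Gamma(n\phi)/\bigl(\alpha\left(\sum_{i=1}^n t_i^\alpha\right)^{n\phi}\bigr)$, split the $\phi$-integral at $\phi=1$, bound the two pieces by $\gamma\left(n-1,n q(\alpha)\right)/(n q(\alpha))^{n-1}$ and an upper incomplete gamma in $n p(\alpha)$ with $p(\alpha)=q(\alpha)-\log n>0$ by AM--GM, and then check the resulting powers of $\alpha$ (of order $\alpha^{n-1/2}$ and $\alpha^{-1/2}$ near $0$, $\alpha^{-3/2}$ and faster decay near $\infty$) are integrable. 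The paper merely organizes the same bounds as an explicit four-way split $s_1+s_2+s_3+s_4$ at $\alpha=1$ and $\phi=1$ instead of your two-piece split plus endpoint analysis, and it leaves implicit the caveats you state explicitly (that the $t_i$ are not all equal, so $p(\alpha)>0$, and that $n$ is large enough for the incomplete-gamma exponents to be positive).
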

\begin{proof}
Since $\frac{\alpha^{n+\tfrac{1}{2}-\tfrac{2\alpha}{1+\alpha}}\sqrt{\psi'(\phi)}}{\Gamma(\phi)^n}\mu^{n\alpha\phi-1}\left\{\prod_{i=1}^n{t_i^{\alpha\phi-1}}\right\}\exp\left\{-\mu^{\alpha}\sum_{i=1}^n t_i^\alpha\right\}\geq0$ by Tonelli theorem (see Folland, 1999) we have,
\begin{equation*}\label{postdemapb2}
\begin{aligned}
d_5&= \int\limits_0^\infty \int\limits_0^\infty \int\limits_0^\infty \frac{\alpha^{n+\tfrac{1}{2}-\tfrac{2\alpha}{1+\alpha}}\sqrt{\psi'(\phi)}}{\Gamma(\phi)^n}\mu^{n\alpha\phi-1}\left\{\prod_{i=1}^n{t_i^{\alpha\phi-1}}\right\}\exp\left\{-\mu^{\alpha}\sum_{i=1}^n t_i^\alpha\right\}d\mu d\phi d\alpha \\
&=\int\limits_0^{\infty}\int\limits_0^{\infty}\frac{\alpha^{n-\tfrac{1}{2}-\tfrac{2\alpha}{1+\alpha}}\sqrt{\psi'(\phi)}}{\Gamma(\phi)^n}\left\{\prod_{i=1}^n{t_i^{\alpha\phi-1}}\right\}\dfrac{\Gamma(n\phi)}{\left(\sum_{i=1}^n t_i^\alpha\right)^{n\phi}}d\phi d\alpha=s_1+s_2+s_3+s_4,
\end{aligned}
\end{equation*}
where
\begin{equation*}
\begin{aligned}
s_1&=\int\limits_0^{1}\int\limits_0^{1}\frac{\alpha^{n-\tfrac{1}{2}-\tfrac{2\alpha}{(1+\alpha)}}\sqrt{\psi'(\phi)}}{\Gamma(\phi)^n}\left\{\prod_{i=1}^n{t_i^{\alpha\phi-1}}\right\}\dfrac{\Gamma(n\phi)}{\left(\sum_{i=1}^n t_i^\alpha\right)^{n\phi}}d\phi d\alpha \\
&< \int\limits_0^{1}c'_1\alpha^{n-\tfrac{1}{2}-\tfrac{2\alpha}{(1+\alpha)}}\frac{\gamma(n-1,n\f{q}(\alpha))}{(n\f{q}(\alpha))^{n-1}}d\alpha<\int\limits_0^{1} g'_1\alpha^{n-\tfrac{1}{2}} d\alpha < \infty,
\end{aligned}
\end{equation*}

\begin{equation*}
\begin{aligned}
s_2&=\int\limits_1^{\infty}\int\limits_0^{1}\frac{\alpha^{n-\tfrac{1}{2}-\tfrac{2\alpha}{(1+\alpha)}}\sqrt{\psi'(\phi)}}{\Gamma(\phi)^n}\left\{\prod_{i=1}^n{t_i^{\alpha\phi-1}}\right\}\dfrac{\Gamma(n\phi)}{\left(\sum_{i=1}^n t_i^\alpha\right)^{n\phi}}d\phi d\alpha\\&<\int\limits_1^{\infty}c'_1\alpha^{n-\tfrac{1}{2}-\tfrac{2\alpha}{(1+\alpha)}}\frac{\gamma(n-1,n\f{q}(\alpha))}{(n\f{q}(\alpha))^{n-1}}d\alpha<\int\limits_1^{\infty} g'_2\alpha^{-\tfrac{3}{2}} d\alpha < \infty,
\end{aligned}
\end{equation*}
\begin{equation*}
\begin{aligned}
s_3&=\int\limits_0^{1}\int\limits_1^{\infty}\frac{\alpha^{n-\tfrac{1}{2}-\tfrac{2\alpha}{(1+\alpha)}}\sqrt{\psi'(\phi)}}{\Gamma(\phi)^n}\left\{\prod_{i=1}^n{t_i^{\alpha\phi-1}}\right\}\dfrac{\Gamma(n\phi)}{\left(\sum_{i=1}^n t_i^\alpha\right)^{n\phi}}d\phi d\alpha, \\ &<\int\limits_0^1 c_2^{'}a^{n-\frac{1}{2}-\tfrac{2\alpha}{1+\alpha}}\frac{\Gamma(\frac{n-1}{2},n\f{p}(\alpha))}{(n\f{p}(\alpha))^{\frac{n}{2}}}d\alpha<\int\limits_0^{1} g'_3\alpha^{-\tfrac{1}{2}} d\alpha<\infty,
\end{aligned}
\end{equation*}
and
\begin{equation*}
\begin{aligned}
s_4&=\int\limits_1^{\infty}\int\limits_1^{\infty}\frac{\alpha^{n-\tfrac{1}{2}-\tfrac{2\alpha}{(1+\alpha)}}\sqrt{\psi'(\phi)}}{\Gamma(\phi)^n}\left\{\prod_{i=1}^n{t_i^{\alpha\phi-1}}\right\}\dfrac{\Gamma(n\phi)}{\left(\sum_{i=1}^n t_i^\alpha\right)^{n\phi}}d\phi d\alpha\\&<\int\limits_1^{\infty}c_2^{'}a^{n-\frac{1}{2}-\tfrac{2\alpha}{1+\alpha}}\frac{\Gamma(\frac{n-1}{2},n\f{p}(\alpha))}{(n\f{p}(\alpha))^{\frac{n}{2}}}d\alpha<\int\limits_1^{\infty} g'_4\alpha^{-2} d\alpha<\infty,
\end{aligned}
\end{equation*}
where $c'_1$, $c'_2$, $g'_1$, $g'_2$, $g'_3$ e $g'_4$ are positive constants such that the above inequalities occur. Therefore, we have: $d_5=s_1+s_2+s_3+s_4<\infty$. \qedhere
\end{proof}

The conditional posterior distributions for $\alpha,\mu$ and $\phi$ are given as
follows:
\begin{equation*}\label{postrejmodc1}
p_R(\alpha|\phi,\boldsymbol{t})\propto \frac{\alpha^{n+\tfrac{1}{2}-\tfrac{2\alpha}{1+\alpha}}}{\left(\sum_{i=1}^n t_i^\alpha\right)^{n\phi}}\left\{\prod_{i=1}^n{t_i^{\alpha\phi-1}}\right\},
\end{equation*}
\begin{equation}\label{postrejmodc2}
p_R(\phi|,\alpha,\boldsymbol{t})\propto \frac{\sqrt{\psi'(\phi)}\Gamma(n\phi)}{\Gamma(\phi)^n}\frac{\left\{\prod_{i=1}^n{t_i^{\alpha\phi-1}}\right\}}{\left(\sum_{i=1}^n t_i^\alpha\right)^{n\phi}},
\end{equation}
\begin{equation*}\label{postrejmodc3}
p_R(\mu|\phi,\alpha,\boldsymbol{t})\sim \f{GG}\left(n\phi,\left(\sum_{i=1}^n t_i^\alpha\right)^{1/\alpha},\alpha\right).
\end{equation*}

By considering the conditional posterior distributions (\ref{postrejmodc2}) the convergence of Monte Carlo Methods Markov Chain (MCMC) is easily achieved. Since the conditional distributions of $\alpha$ e $\phi$ have no closed form, the Metropolis-Hastings algorithm was used (see Gamerman and Lopes, 2006) to obtain the subsequent quantities.

\section{Simulation Study}
\vspace{0.3cm}

In this section we develop a simulation study via Monte Carlo methods. The main goal is to study the efficiency of the proposed method. The following procedure is adopted:
\begin{enumerate}
\item Generate values of the $\f{GG}(\phi,\mu,\alpha)$ with size $n$.
\item Using the values obtained in Step 1, calculate the posterior estimates $\hat{\phi},\hat{\mu}$ e $\hat{\alpha}$ using MCMC.
\item Repeat the Steps $2$ and $3$ $N$ times.
\item Using $\boldsymbol{\hat\theta}$ and $\boldsymbol{\theta}$, compute the mean relative estimates (MRE) $\sum_{i=1}^{N}\frac{\hat\theta/\theta}{N}$ and the mean square errors (MSE) $\sum_{i=1}^{N}\frac{(\hat\theta-\theta)^2}{N}$.
\end{enumerate}

It is expected that the MRE's are closer to one with smaller MSE.  

We fix the parameter values at $\boldsymbol\theta=(0.4,1.5,5)$, $N=1,000$ and $n=(50,100,150,200,250,300)$. It is also evaluated the coverage probability of the $95\%$ highest posterior density ($HPD_{95\%}$) intervals (see Migon, Gamerman and Louzada, 2014), that is, the Lower (L), Upper (U) and $95\%$ (C) amounts of coverage. 
For a large number of experiments, using a confidence level of $95\%$, the frequencies of intervals that covered the true values of $\boldsymbol{\theta}$ should be close $95\%$.
 
For each simulated sample, $31,000$ iterations were performed using MCMC methods. As a sample burning, we discard the first $1,000$ initial values. To reduce the autocorrelation values of the chains, we saved generated samples with space of size $30$, getting at the end three chains of size $1,000$. To check the convergence of the obtained chains, we used the Geweke criterion (Geweke, 1992) with a confidence level of $95\%$. We computed the posterior mode estimate, yielding $2,000$ estimates for $\phi,\mu$ and $\alpha$ using different simulated samples.

Tables 1 shows the MRE's, MSE's, the coverage probability L, U and C with a confidence level equals to $95\%$ from the estimates obtained of the posterior modes for $1,000$ simulated samples, considering different values of $n$.
It can be observed that the MSE's decreases as $n$ increases and also, as expected, the values of MRE's tend to 1, allowing to obtain good inferences for the parameters of the GG distribution. It is important to point out that using this approach, the coverage probability (C) for all parameters tend to $0.95$ and the values of (L) and (U) are approaching $0.025$ as there is an increase of the size of $n$. 

\begin{table}[ht]
\centering
\caption{MRE, MSE, L, U and C estimates for $1000$ samples of sizes $n=(50,100,200,250,300)$.}
\begin{tabular}{c r r r r r r }
  \hline
 &  \multicolumn{1}{c}{$n$} & \multicolumn{1}{c}{MRE} & \multicolumn{1}{c}{MSE} & \multicolumn{1}{c}{L} & \multicolumn{1}{c}{U}& \multicolumn{1}{c}{C}  \\ 
	\hline
$\phi=0.4$ &  \multicolumn{1}{c}{$50$}      & 0.668 & 0.052 & 0.010 & 0.000 & 0.990  \\
&  \multicolumn{1}{c}{$100$}     & 0.789 & 0.035 & 0.018 & 0.056 & 0.926  \\
& \multicolumn{1}{c}{$150$}     & 0.862 & 0.019 & 0.015 & 0.037 & 0.948  \\
&	\multicolumn{1}{c}{$200$}     & 0.898 & 0.016 & 0.017 & 0.029 & 0.954  \\
&  \multicolumn{1}{c}{$250$}     & 0.919 & 0.012 & 0.013 & 0.046 & 0.941  \\
&	\multicolumn{1}{c}{$300$}     & 0.942 & 0.010 & 0.019 & 0.026 & 0.955  \\

	\hline
$\mu=1.5$ & \multicolumn{1}{c}{$50$}     & 0.950 & 0.033 & 0.016 & 0.001 & 0.983 \\
&  \multicolumn{1}{c}{$100$}      & 0.967 & 0.018 & 0.016 & 0.058 & 0.926 \\
&  \multicolumn{1}{c}{$150$}     & 0.979 & 0.010 & 0.014 & 0.035 & 0.951 \\
&	\multicolumn{1}{c}{$200$}      & 0.984 & 0.008 & 0.016 & 0.032 & 0.952 \\
&  \multicolumn{1}{c}{$250$}      & 0.988 & 0.007 & 0.019 & 0.038 & 0.943 \\
&	\multicolumn{1}{c}{$300$}     & 0.991 & 0.006 & 0.021 & 0.023 & 0.956 \\
	\hline

$\alpha=5.0$ & \multicolumn{1}{c}{$50$}      & 0.952 & 3.398 & 0.006 & 0.012 & 0.982 \\
&  \multicolumn{1}{c}{$100$}     & 1.083 & 4.551 & 0.055 & 0.017 & 0.928 \\
&  \multicolumn{1}{c}{$150$}     & 1.042 & 2.086 & 0.038 & 0.013 & 0.949 \\
&	\multicolumn{1}{c}{$200$}     & 1.028 & 1.470 & 0.032 & 0.013 & 0.955 \\
&	\multicolumn{1}{c}{$250$}     & 1.022 & 1.183 & 0.045 & 0.016 & 0.939 \\
&	\multicolumn{1}{c}{$300$}     & 1.010 & 0.915 & 0.033 & 0.023 & 0.944 \\
	\hline
 
\end{tabular}
\end{table}

\section{Industrial experimental data}
\vspace{0.3cm}

In this section, it is considered a dataset related to the lifetimes of 30 units in an industrial experiment (Table 2) extracted from Meeker \& Escobar (1998). 

\begin{table}[ht]
\caption{Dataset related to the lifetimes of $30$ units in an industrial experiment (Meeker \& Escobar, 1998, p.383).}
\centering  
\begin{tabular}{c c c c c c c c c c }  
\hline  
275 & 13 & 147 & 23 & 181 & 30 & 65 & 10 & 300 & 173 \\
106 & 300 & 300 & 212 & 300 & 300 & 300 & 2 & 261 & 293 \\
88 & 274 & 28 & 143 & 300 & 23 & 300 & 80 & 245 & 266 \\ [0ex] 
\hline  
\end{tabular}
\end{table}

We consider the GG distribution to analyze this dataset and compare the results with the fits of some of its particular cases: Weibull, Log Normal and Gamma distribution through the BIC (Schwarz, 1978) and DIC tests (Spiegelhalter et al., 2002).
Figures containing the generated samples of the marginal posterior densities for $\phi,\mu$ and $\alpha$, through MCMC methods, time-series plots of the simulated samples and the sample autocorrelations are available in Figure \ref{f-apli-dsa-1}.

It is observed from Figure \ref{f-apli-dsa-1} and also through the Geweke criterion, that there is an indication of convergence of the simulation MCMC algorithm. The results of the simulated chains can be assumed to be samples of the marginal posterior distributions for $\phi,\mu$ and $\alpha$. Due to the asymmetry of the marginal posterior distributions, we estimated the posterior modes of the posterior distribution for $\boldsymbol{\theta}$ (see Table 3).
\begin{table}[!ht]
\caption{Posterior modes, posterior standard deviations, and $95\%$ credibility intervals for the parameters $\phi,\mu$ and $\alpha$.}
\centering  
\begin{center}
  \begin{tabular}{ c  c   c c}
    \hline
		$\boldsymbol{\theta}$  & Mode & SD & $HPD_{95\%}(\theta)$ \\ \hline
    \ \ $\phi$ \ \   & 0.08135 & 0.06149 &  (0.04560; 0.28829) \\ \hline
    \ \ $\mu$   \ \  & 0.00299 & 0.00024 &  (0.00235; 0.00327) \\ \hline
    \ \ $\alpha$ \ \ & 10.48558 & 4.51768 &  (3.38645; 21.19713) \\ \hline
  \end{tabular}
\end{center}
\end{table}

\begin{figure}[!htb]
\centering
\includegraphics[scale=0.55]{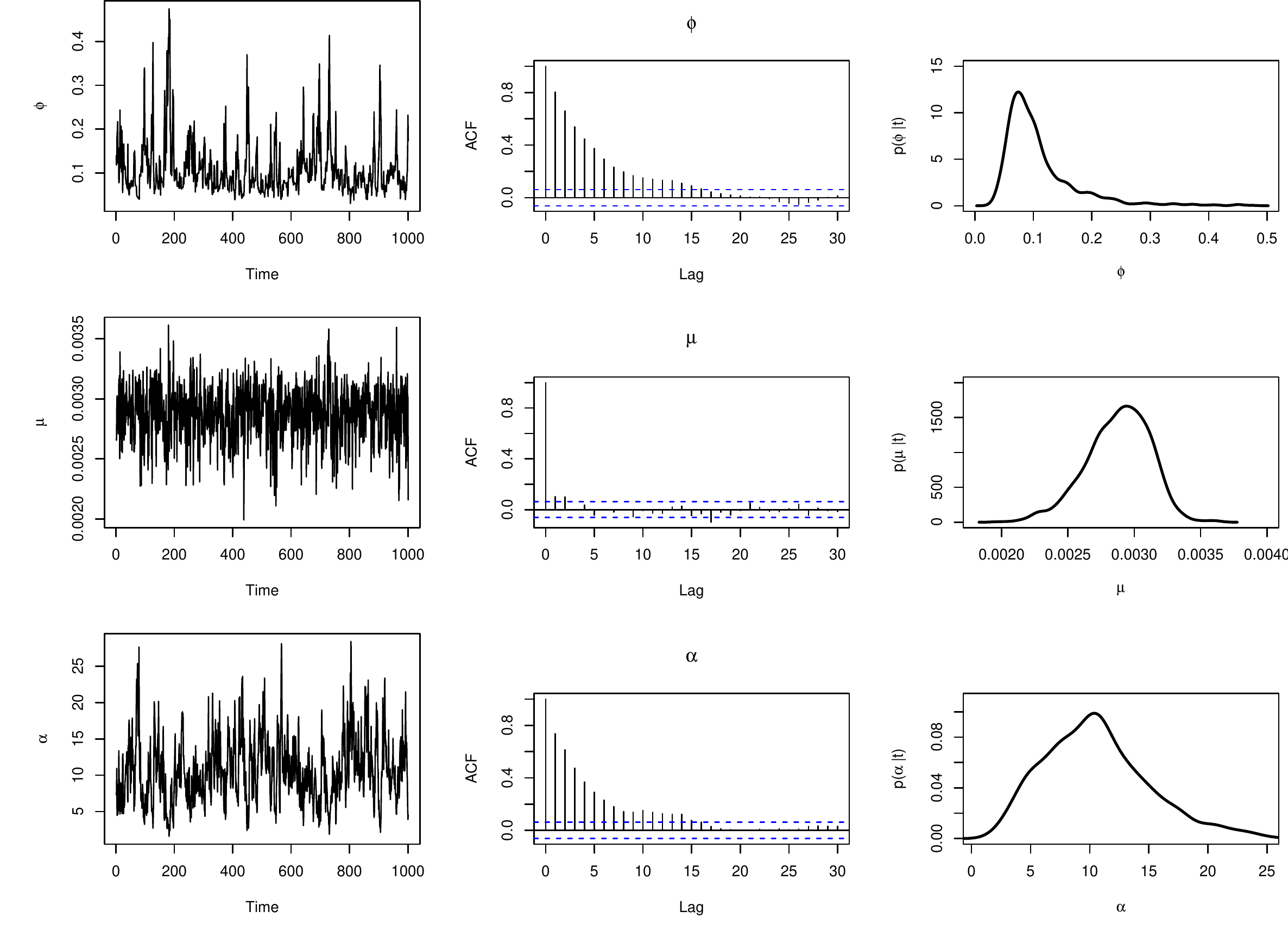}
\caption{Plots of temporal autocorrelations and the marginal posterior densities for the generated series for $\phi,\mu$ and $\alpha$}.\label{f-apli-dsa-1}
\end{figure}

Table 4 shows the results of DIC and BIC criteria for different probability distributions, considering the presented data.

\begin{table}[!ht]
\caption{Results of DIC and BIC criteria for different probability distributions.}
\centering  
\begin{center}
  \begin{tabular}{ c  c  c  c c}
    \hline
		Criterion  & Generalized Gamma & \ Weibull \ & \ \ Gamma \ \ & Log-normal \\ \hline
      DIC & \textbf{197.517} & 363,60 & 374,50 & 386,10 \\ \hline
      BIC & \textbf{362.765} & 385,70 & 377,18 & 389,08 \\ \hline
  \end{tabular}
\end{center}
\end{table}

Given a set of fitted candidate distributions for the lifetime $T$, the preferred one is given by the distribution which provides the lowest BIC and DIC. Based on both criteria, we can conclude from the results from the Table 4 that the GG distribution is the best distribution for the industrial data set.

\newpage
\section{Final Comments}
\vspace{0.3cm}

In this paper, different reference prior distributions for the GG distribution were obtained. We proved that such priors lead to improper posterior distributions and should not be used in a Bayesian framework. Such a problem is overcome via a modification made in the reference prior distribution, where the parameters of interest are ordered.  
It was proved that using such a prior we obtain a proper posterior distribution. 

From the practical point of view, based on a simulation study and on an important data from Meeker \& Escobar (1998), we demonstrated that using the obtained posterior distribution it is possible to obtain good estimates of the parameters of the GG distribution.  

\section*{Acknowledgements}

The research was partially supported by CNPq, FAPESP and CAPES of Brazil.

\bibliographystyle{abbrv}

\end{document}